\def\endproof{\hfill$\Box$\vspace{0.4cm}}
\newtheorem{propo}{Proposition}[section]
\newtheorem{lemma}[propo]{Lemma}
\newtheorem{thm}[propo]{Theorem}
\newtheorem{theorem}[propo]{Theorem}
\newtheorem{remark}[propo]{Remark}
\newcommand{\stat}{\mu}
\newcommand{\<}{\langle}
\renewcommand{\>}{\rangle}
\newcommand{\reals}{{\mathds R}}
\newcommand{\naturals}{{\mathds N}}
\def\l|{\left|\left|}
\def\r|{\right|\right|}
\def\P{\mathbb P}
\def\E{\mathbb E}
\def\prob{{\mathbb P}}
\def\ind{{\mathbb I}}
\def\tu{\widetilde{u}}
\def\hu{\hat{u}}
\def\hU{\widehat{U}}
\def\hu{\widehat{u}}
\def\tx{\widetilde{x}}
\def\Trace{{\rm Tr}}
\def\cA{{\cal A}}
\def\cB{{\cal B}}
\def\cN{{\sf N}}
\def\cP{{\cal P}}
\def\tlam{\widetilde{\lambda}}
\newcommand{\srank}{{\gamma}}
\newcommand{\Ms}{S}
\newcommand{\Mt[1]}{S^{(#1)}}
\newcommand{\var}{\text{\rm Var}}
\newcommand{\Var}{\text{\rm Var}}
\newcommand{\bE}{\mathbb{E}}
\newcommand{\good}{\mathcal{G}}
\newcommand{\nogood}{\overline{\mathcal{G}}}
\newcommand{\cF}{\mathcal{F}}
\def\ve{\varepsilon}
\def\prob{{\mathbb P}}
\def\hlam{\widehat{\lambda}}
\def\ind{{\mathbb I}}
\def\id{{\mathbb I}}
\def\reals{{\mathbb R}}
\def\naturals{{\mathbb N}}
\def\<{\langle}
\def\>{\rangle}
\def\E{{\mathbb E}}
\def\P{{\sf P}}
\def\de{{\rm d}}
\def\bu{{\bf u}}
\def\bhU{{\bf \widehat{U}}}
\def\bx{{\bf x}}
\def\by{{\bf y}}
\def\bz{{\bf z}}
\def\bU{{\bf U}}
\def\bX{{\bf X}}
\def\bY{{\bf Y}}
\def\bZ{{\bf Z}}
\def\tlam{{\widetilde{\lambda}}}
\def\LA{{\sf L1}}
\def\LB{{\sf L2}}
\def\Aa{{\sf A1}}
\def\Ab{{\sf A2}}
\def\Abb{{\sf A2'}}
\def\G{{\sf G}}
\def\cF{{\cal F}}
\def\normal{{\sf N}} 
\def\gossip_pca{{\sc Gossip PCA}}
\begin{document}

\title{Gossip PCA}
%
%
%
%
%

\numberofauthors{3} 
%
\author{
%
%
\alignauthor
	Satish Babu Korada\\
       \affaddr{Electrical Engineering Department}\\
       \affaddr{Stanford, CA 94305}\\
       \email{satishbabu.k@gmail.com}
\alignauthor
	Andrea Montanari\\
       \affaddr{Electrical Engineering and Statistics Departments}\\
       \affaddr{Stanford University}\\
       \affaddr{Stanford, CA 94305}\\
       \email{montanari@stanford.edu}
\alignauthor
	Sewoong Oh\\
       \affaddr{EECS Department}\\
       \affaddr{Massachusetts Institute of Technology}\\
       \affaddr{Cambridge, MA 02139}\\
       \email{swoh@mit.edu}
}

\maketitle

\begin{abstract}

Eigenvectors of data matrices play an important role in many computational
problems, ranging from signal processing to machine learning and control.
For instance, algorithms that compute positions of the nodes of a
wireless network on the basis of pairwise distance measurements require 
a few leading eigenvectors of the distances matrix.
While eigenvector calculation is a standard topic in numerical linear
algebra, it becomes challenging under severe communication or 
computation constraints, or in absence of central scheduling.
In this paper we investigate the possibility of computing the leading
eigenvectors of a large data matrix through gossip algorithms.

The proposed algorithm amounts to iteratively multiplying 
a vector by independent random
sparsification of the original matrix and averaging the resulting
normalized vectors. This can be viewed as a generalization of
gossip algorithms for consensus, but the resulting dynamics is significantly 
more intricate. Our analysis is based on controlling the
convergence to stationarity of the associated Kesten-Furstenberg Markov chain.

\end{abstract}

\category{C.2.4}{Computer-Communication Networks}{Distributed Systems}[Distributed applications ]

\terms{Algorithms, Performance}

%
%
\section{Introduction and overview}
\label{sec:intro}

Consider a system formed by $n$ nodes with limited
computation and communication capabilities, and connected 
via the complete graph $K_n$. To each edge $(i,j)$
of the graph is associated the entry $M_{ij}$ of an $n\times n$
symmetric matrix $M$. Node $i$ has access to the entries
of $M_{ij}$ for $j\in\{1,\dots,n\}$. An algorithm is required to compute 
the eigenvector of $M$ corresponding to the eigenvalue 
with the largest magnitude. Denoting by $u\in\reals^n$ the eigenvector,
each node $i$ has to compute the corresponding entry $u_i$.
The eigenvector $u$ is often called the \emph{principal component} of $M$,
and analysis methods that approximate a data matrix by its 
leading eigenvectors are referred to as principal component
analysis \cite{Jolliffe}.

Eigenvector calculation is a key step in many computational 
tasks, e.g. dimensionality reduction \cite{Isomap}, classification \cite{StatisticalLearning}, 
latent semantic indexing \cite{LSI}, link analysis (as in PageRank) \cite{PageRank}.
The primitive developed in this paper can therefore be useful
whenever such tasks have to be performed under stringent 
communication and computation constraints. As a stylized 
application, consider the case in which the nodes are $n$
wireless hand-held devices (for related commercial products,
see \cite{Ekahau,Qwikker,Sonitor}). Accurate positioning of the nodes
in indoor environments is difficult through standard methods such as GPS
\cite{GPS}.
Because of intrinsic limitation of GPS and of roof scattering,
indoor position uncertainty can be of 10 meters or larger, which
is too much for locating a room in a building.
An alternative approach consists in measuring pairwise distances 
through delay measurements between the nodes
and reconstructing the nodes positions 
from such measurements (obviously this is possible only up
to a global rotation or translation).
Positions indeed 
can be extracted from the matrix of square distances by computing
its three leading eigenvectors (after appropriate centering) \cite{OKM10}.
This method is known as multidimensional scaling, and 
we will use it as a running example throughout this paper.

A simple centralized method for computing the eigenvector
is to collect all the matrix entries at one special node, say node $i$, 
to perform the eigenvector calculation there and then 
flood back its entries to each node. This centralized approach 
has several disadvantages. It requires communicating $n^2$
real numbers through the network at the beginning
of execution, and puts
a large memory, computation and communication burden on node $i$.
It is also very fragile to failure or Byzantine behavior of $i$.

The next simplest idea is to use some version of the \emph{power} \emph{method}. 
A decentralized power method would proceed by 
synchronized iterations through the network. 
At $t$-th iteration, each node keeps a running estimate $x^{(t)}_i$
of the leading eigenvector. This is updated by letting 
$x^{(t+1)}_i = \sum_{j=1}^nM_{ij}x^{(t)}_j$. If $M$ has strong spectral 
features
(in particular, if the two largest eigenvalues are not close) these estimates
will converge rapidly. On the other hand, each iteration requires
$(n-1)$ real numbers to be transmitted to each node, and 
$n$ sums and multiplications to be performed at the node. In other words, 
the node capabilities have to scale with the network size. This problem
becomes even more severe for wireless devices, which are intrinsically
interference-limited. Within the power method approach, $n^2$ communications 
have to be scheduled at each time thus requiring significant bandwidth.
Finally, the algorithm requires complete synchronization of the $n^2$
communications and is fragile to link failures (which can be quite frequent
e.g. due to fading).

A simple and yet powerful idea that overcomes some of these problems is 
sparsification. Throughout the paper, we say that $S\in\reals^{n\times n}$
is a \emph{sparsification} of $M$ if it is obtained by setting
to $0$ some of the entries of $M$ and (eventually) rescaling the non-zero
entries. 
A sparsification is useful if most of its entries are zero,
and yet 
the resulting matrix has a leading eigenvector close to the original one.
Given a sparsified matrix $S$, power method
can be applied by $x^{(t+1)}_i = \sum_{j=1}^nS_{ij}x^{(t)}_j$. 
If $S$ has $d$-nonzero entries per row, each node needs to 
communicate $d$ real numbers, and to perform 
$d$ sums and multiplications. For wireless devices, the badwidth
scales at most like $nd$. 

In \cite{AM02} Achlioptas and McSherry showed that a sparsification
can be constructed such that 
\begin{eqnarray}
\|M-S\|_2\le \theta \, \|M\|_2\, ,\label{eq:NormBound}
\end{eqnarray}
with only $d= O(1/\theta^2)$ non-zero entries per row. 
The inequality
(\ref{eq:NormBound}) immediately implies that computing the leading 
eigenvector of $S$, yields an estimator $\hu$ that satisfies 
$\|\hu-u\| \le 2\theta$. 
(Here and below, for $v,w\in\reals^m$, $v^\ast$ denotes its transpose 
and $\<v,w\>=v^\ast w$ denotes the scalar product of two vectors.
Let $\|v\|=\<v,v\>$ denote its Euclidean --or $\ell_2$-- norm, 
i.e. $\|v\|^2\equiv\sum_{i=1}^nv_i^2$. 
For a matrix $A$, $\|A\|_2$ denotes its $\ell_2$ operator norm,
i.e. $\|A\|_2\equiv \sup_{v\neq 0}\|Av\|/\|v\|$.)
The construction of \cite{AM02} is based on random sampling. Each entry 
of $M$ is set to $0$ independently with a given probability $1-p=1-d/n$.
Non-zero entries are then rescaled by a factor $1/p$. The 
bound (\ref{eq:NormBound}) is proved to hold with high probability 
with respect to the randomness in the sparsification.

While this approach is simple and effective, it still presents important 
shortcomings: $(i)$ For a fixed per node complexity 
which scales like $1/\theta^2$, 
this procedure achieves precision $\theta$: 
can one achieve a better scaling?
$(ii)$ A fixed subnetwork $G$ of the complete graph (corresponding to the 
sparsity pattern of $S$) needs to be maintained through the whole process.
This can be challenging in the presence of fading or of node 
failures/departures. $(iii)$ The target precision is to be decided 
at the beginning of the process, when the sparsification is constructed.

In this paper we use sparsification as a primitive and propose 
a new way to exploit its advantages. Roughly speaking at each round 
$t$ a new independent sparsification $S^{(t)}$ of $M$ is produced.
Estimates of the leading eigenvector are generated by applying 
$S^{(t)}$, i.e. through 
\begin{eqnarray}
x^{(t)}=S^{(t)}x^{(t-1)}\, ,\label{eq:Basic}
\end{eqnarray}
and then averaging across iterations $\hu^{(t)} \propto 
\sum_{\ell\le t}x^{(\ell)}/\|x^{(\ell)}\|$. 
We will refer to this algorithm as \gossip_pca.
In the limit case in which $S^{(t)}$ are in fact deterministic
and coincide with a fixed $S$, the present scheme reduces to the previous one.
However, general independent random sparsifications $S^{(t)}$ can 
model the effect of fading, short term link failures, node departures. 
(While complete independence is a simplistic model for these effects,
it should be possible to include short time-scale correlations
in our treatment.) 
Finally, the use of truly random, independent sparsifications
might be a choice of the algorithm designer.

Does the time-variability of $S^{(t)}$ deteriorate the algorithm precision?
Surprisingly, the opposite turns out to be true:
Using independent sparsifications appears to benefit accuracy 
by effectively averaging over a larger sample of the entries of $M$. 
As an example consider the sparsification scheme mentioned
above, namely each entry of $S^{(t)}$ is set to $0$
independently with a fixed probability $1-p$.
Then, with respect to the total per-node computation and communication budget,  
scaling of the $\ell_2$ error $\|\hu-u\|$  
remains roughly the same as in the time-independent case (see Section \ref{sec:example}). 
Remarkably, the way optimal accuracy is achieved is significantly 
different from the one that is optimal within the time-independent case.
In the latter case it is optimal to invest resources in the densest
possible sparsification $S$, and then iterate it a few times. 
Within the present approach, one should rather use much sparser matrices
$S^{(t)}$ and iterate the basic update (\ref{eq:Basic}) many more times.
The use of sparser subnetworks 
is advantageous
both for robustness and the overhead of maintaining/synchronizing 
such networks.

Our main analytical result is
an error bound for the time-dependent iteration (\ref{eq:Basic}), 
that takes the form
\begin{eqnarray}
\|\hu^{(t)}-u\| \le C\left({\theta}/{\sqrt{t}}\,+\,\theta^2\log(1/\theta)^2\right)\, ,
\label{eq:UBInformal}
\end{eqnarray}
with a constant $C$ explicitly given below. 
Notice that,  
for $t$ large enough, this yields an error
roughly of size $\theta^2$. 
While using the same number of communications per node, 
this is significantly smaller 
than the error $\theta$ obtained by computing the leading eigenvector 
of a single sparsification. 

The upper bound (\ref{eq:UBInformal}) holds under the following 
three assumptions:
$(i)$ $\|M-S^{(\ell)}\|_2\le\theta\|M\|_2$ for all $\ell\le t$;
$(ii)$ $\E(S^{(\ell)}) = M$; $(iii)$ $S^{(\ell)}$ invertible
for all $\ell$. Further it is required that the initial condition satisfies 
$\|x^{(0)}-u\| \le C\theta$. This can be generated by iterating
a fixed sparsification (say $S^{(1)}$) for a modest number of iterations 
(roughly $\log(1/\theta)$). Numerical simulations and heuristic arguments
further suggest that the last assumption is actually a proof artifact and
not needed in practice (see further discussion in Section
\ref{sec:analysis}).

The rest of the paper is organized as follows: 
Section \ref{sec:Main} provides a formal description of our 
algorithms and of our general performance guarantees. In Section
\ref{sec:example} we discuss implications of our analysis in specific 
settings. 
Section \ref{sec:related} reviews related work on randomized low complexity methods. 
Section \ref{sec:proof2} describes the proof of our main 
theorem. This leverages on the theory of products of random matrices,
a line of research initiated by Furstenberg and Kesten in the sixties
\cite{Furstenberg}, with remarkable applications in dynamical systems theory 
\cite{Oseledets}.
The classical theory focuses however on matrices of fixed dimension, 
in the limit of an infinite number of iterations, while here we 
are interested in high-dimensional (large $n$) applications.
We need therefore to characterize the tradeoff between dimensions 
and number of iterations.
In Section \ref{sec:technical}, we provide the proof of the technical lemmas used in the main proof. 
Finally, Section \ref{sec:value} 
discusses extending our algorithm to estimate the largest eigenvalues and 
provides a general performance guarantee.

%
%
\section{Main results} 
\label{sec:Main}

In this section, we spell out the algorithm 
execution and state the main performance guarantee.

\subsection{Algorithm}
\label{sec:algorithm}

As mentioned in the previous section $M\in\reals^{n\times n}$ is a symmetric
matrix, with eigenvalues $\lambda_1,\lambda_2,\dots,\lambda_n$.
Without loss of generality, we assume that the largest eigenvalue 
$\lambda_1$ is positive.
Further, we assume $\lambda_1>|\lambda_2|$ strictly. 
We will also write $\lambda \equiv\lambda_1$ and $u$ for the corresponding
eigenvector.
We assume to have at our disposal a primitive that
outputs a random sparsification $\Ms$ of  $M$.
A sequence of independent such sparsifications
will be denoted by $\{\Mt[1],\Mt[2],\dots\}$.
In the next two paragraphs we describe a centralized version of the algorithm,
and then the fully decentralized one.

\subsubsection{Centralized algorithm}

The system is initialized to a vector $x^{(0)}\in\reals^n$. 
Then we iteratively multiply the i.i.d. sparsifications 
$\Mt[1], \Mt[2],\dots$ to get a sequence of 
vectors $x^{(1)},x^{(2)},\dots$.
After $t$ iterations, our estimate for the leading eigenvector $u$ is 
\begin{eqnarray}
\hu^{(t)} = c(t)\sum_{s=1}^t\frac{x^{(s)}}{\|x^{(s)}\|}\, ,\label{eq:IterAverage}
\end{eqnarray}
with $c(t)$ the appropriate normalization to ensure $\|\hu^{(t)}\|=1$.

Note that, even after normalization, there is a residual sign ambiguity:
both $u$ and $-u$ are eigenvector. When in the following 
we write that $\hu^{(t)}$ approximate $u$ within a certain accuracy, it is
understood that $\hu^{(t)}$ does in fact approximate the closest of
$u$ and $-u$. A more formal resolution of this ambiguity 
uses the projective manifold define in Section \ref{sec:proof2}.

\subsubsection{Decentralized algorithm}

The algorithm described so far uses the following operations:
\noindent$(i)$ \emph{Multiplying vector $x^{(t-1)}$ by $S^{(t)}$,}
cf. Eq.~(\ref{eq:Basic}). If $S^{(t)}$
has $dn$ non-zero elements, this requires $O(d)$ operations per node per
round.\\
\noindent$(ii)$ \emph{Computing the normalizations $\|x^{(1)}\|$,
$\|x^{(2)}\|$,\dots,$\|x^{(t)}\|$.} Since $\|x^{(\ell)}\|^2 = \sum_{i=1}^n
(x^{(\ell)}_i)^2$, this task can be performed via a standard 
gossip algorithm. This entails an overhead of $\log(1/\ve)$ per node per 
iteration for a target precision $\ve$. We will neglect this 
contribution in what follows.\\
\noindent$(iii)$ \emph{Averaging normalized vectors across iterations,}
cf. Eq.~(\ref{eq:IterAverage}). 
Since node $i$ keeps the sequence of estimates $x^{(1)}_i$,\dots,
$x^{(t)}_i$,
this can be done without communication overhead, with $O(1)$ computation per 
node per iteration.

\vspace{0.1cm}

Finally the normalization constant $c(t)$ in Eq.~(\ref{eq:IterAverage})
needs to be computed. 
This amounts to computing the norm of the vector
on the right hand side of Eq.~(\ref{eq:IterAverage}), which 
is the same operation as in step $(2)$ (but has to be carried out only once).
From this description, it is clear that operation $(1)$ (matrix-vector 
multiplication) dominates the complexity and we will focus on this
in our discussion below and in Section \ref{sec:example}.
%
%
\subsection{Analysis}
\label{sec:analysis}

The algorithm design/optimization amounts to the choice of 
number of iterations $t$ and the the sparsification
method, which produces the i.i.d. matrices $\{S^{(\ell)}\}$.
The latter is characterized by two parameters:
$\theta$ which bounds the sparsification 
accuracy as per Eq.~(\ref{eq:NormBound}), and $d$,
the average number of non-zero entries per row, which determines 
its complexity.

The trade-off between $d$ and $\theta$ depends on the sparsification 
method and will be further discussed in the next section.
Our main result bounds the error of the algorithm in terms
of $\theta$, $t$ and of a characteristic of the matrix $M$, namely 
the ratio of the two largest eigenvalues $l_2= |\lambda_2|/\lambda$.
The proof of this theorem is presented in Section \ref{sec:proof2}.

\begin{thm}\label{thm:eigenvector}
Let $\{\Mt[\ell]\}_{\ell\ge 1}$ be a sequence of i.i.d. 
$n\times n$ random matrices such that 
$\E[\Mt[\ell]]=M$, $\|\Mt[\ell]-M\|_2\leq\theta\|M\|_2$,
$\Mt[\ell]$ is almost surely non-singular, and there is no proper
subspace $V\subseteq \reals^n$ such that $\Mt[\ell]V\subseteq V$
almost surely. 
Further, let $x^{(0)}\in\reals^n$
be such that $\|x^{(0)}-u\| \le \theta/(1-l_2)$ for the leading
eigenvector of $u$. Let the eigenvector estimates be
defined as per Eq.~(\ref{eq:Basic}) and (\ref{eq:IterAverage}).
Finally assume $\theta\leq(1/40)(1-l_2)^{3/2}$ and let $l_2\equiv|\lambda_2|/\lambda$.

Then, with probability larger than $1-\max(\delta,16/n^2)$, 
\begin{eqnarray}
\|\hu^{(t)}-u\| \leq  \frac{18\theta}{(1-l_2)\sqrt{t\delta}} +  
	 12\Big(\frac{\theta\log(1/\theta)}{(1-l_2)}\Big)^2 \;.
\label{eq:MainBound}
\end{eqnarray}
\end{thm}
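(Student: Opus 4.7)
The plan is to work with the normalized iterates $v^{(s)} = x^{(s)}/\|x^{(s)}\|$, which form a Markov chain on the sphere (the ``projective manifold'' referenced in Section \ref{sec:algorithm}) governed by $v^{(s+1)} = S^{(s+1)} v^{(s)}/\|S^{(s+1)} v^{(s)}\|$. Writing $S^{(\ell)} = M + E^{(\ell)}$ with $\E[E^{(\ell)}] = 0$ and $\|E^{(\ell)}\|_2 \le \theta\|M\|_2$, and rescaling so that $\lambda = 1$, the non-singularity and irreducibility hypotheses place us in the Furstenberg--Kesten regime: the chain admits a unique stationary law $\pi$. Let $m = \E_\pi[v]$ and decompose
\begin{equation*}
\hat u^{(t)} - u = \bigl(\hat u^{(t)} - c(t)\, t\, m\bigr) + \bigl(c(t)\, t\, m - u\bigr).
\end{equation*}
The two summands---fluctuations around the stationary mean and bias of the stationary mean itself---will be shown to contribute respectively the $\theta/((1-l_2)\sqrt{t\delta})$ and $(\theta\log(1/\theta)/(1-l_2))^2$ terms of (\ref{eq:MainBound}).

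The common geometric ingredient is one-step contraction of the chain. Writing $v = \sqrt{1-\epsilon^2}\,u + \epsilon w$ with $w \perp u$, the first-order expansion $Sv = \sqrt{1-\epsilon^2}\,u + \epsilon Mw + Ev$ shows that the $u^\perp$-component of $Sv/\|Sv\|$ shrinks by a factor $l_2$ while absorbing an $O(\theta)$ noise, so the ball $\mathcal{G} = \{v : \|v-u\| \le C\theta/(1-l_2)\}$ is invariant in expectation, with equilibrium angular scale $\theta/\sqrt{1-l_2}$. Combining the initialization hypothesis with Azuma--Hoeffding-type concentration for the martingale sequence $\{E^{(\ell)} v^{(\ell-1)}\}$ gives that $v^{(s)} \in \mathcal{G}$ for all $s \le t$ with probability at least $1 - 16/n^2$. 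Restricted to $\mathcal{G}$, the chain has a spectral gap of order $1-l_2$ and an $O(\theta^2/(1-l_2))$ per-step quadratic mean displacement, since the linear-in-$E$ term is killed by $\E[E^{(\ell)}]=0$. A Lyapunov/telescoping argument accumulating this quadratic per-step bias over $T$ of order $\log(1/\theta)/(1-l_2)$ mixing steps yields the stationary bias bound $\|m - u\| \le O((\theta \log(1/\theta)/(1-l_2))^2)$.

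For the fluctuation term, the same mixing implies a standard Markov-chain variance bound
\begin{equation*}
\Var\!\Bigl(\frac{1}{t}\sum_{s=1}^{t}\langle v^{(s)}, w\rangle\Bigr) \le \frac{C\theta^2}{t(1-l_2)^2}
\end{equation*}
for every unit test vector $w$, by combining the stationary variance $\Var_\pi(\langle v,w\rangle) \le C\theta^2/(1-l_2)$ with geometric mixing at rate $l_2$. Using that $c(t) = 1/t + O(\theta/(1-l_2))$---a consequence of each $v^{(s)}$ being a unit vector well aligned with $u$ on $\mathcal{G}$---Chebyshev's inequality applied to the vector-valued time average yields $\|\hat u^{(t)} - c(t)tm\| \le 18\theta/((1-l_2)\sqrt{t\delta})$ with probability at least $1 - \delta$. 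The triangle inequality produces (\ref{eq:MainBound}), with failure probability bounded by $\max(\delta, 16/n^2)$. I expect the main technical obstacle to be the second-order bias estimate on $\pi$: the mean-zero noise kills the first-order correction, but controlling the quadratic contribution coming from the $1/\|Sv\|$ renormalization, and showing that it averages to something of strictly smaller order than the typical angular fluctuation $\theta/\sqrt{1-l_2}$, requires a careful perturbative analysis of the transfer operator near its principal eigenfunction at $u$, rather than a generic spectral-gap argument.
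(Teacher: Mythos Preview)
Your architecture---decompose into stationary bias plus time-average fluctuation, control the latter via a Markov-chain variance bound and Markov/Chebyshev, control the former via a second-order argument exploiting $\E[S^{(\ell)}]=M$---matches the paper's Section~\ref{sec:eigenvectorproof}. Two of your technical choices diverge from the paper, and the first is a genuine gap.

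The good set $\G$ is not merely ``invariant in expectation.'' Under $\|S^{(\ell)}-M\|_2\le\theta\lambda$ and $\theta\le(1/40)(1-l_2)^{3/2}$, the paper proves (Lemma~\ref{lem:goodset}) that $S^{(\ell)}\bx\in\G$ holds \emph{almost surely} for every $\bx\in\G$, via a direct perturbation estimate on $\|\cP_{u^\perp}(Sx/\|Sx\|)\|$. Your Azuma--Hoeffding route would not deliver the claimed $1-16/n^2$: the increments $E^{(\ell)}v^{(\ell-1)}$ have size $\theta$ with no $n$-dependence, so any martingale tail scales like $\exp(-c/(t(1-l_2)^2))$ rather than a power of $n$, and a pure martingale bound cannot exploit the deterministic contraction toward $u$ that is doing the real work. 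Deterministic invariance is also what yields $\mu(\G^c)=0$ (Theorem~\ref{thm:muproperty}), on which both Lemma~\ref{lem:NormConvergence} and Lemma~\ref{lem:statdist} rely pointwise; with only high-probability containment you would have to redo those arguments under conditioning.

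Your bias mechanism also differs from the paper's. Lemma~\ref{lem:statdist} does not accumulate a per-step $O(\theta^2)$ displacement over a mixing time. It iterates stationarity $k$ times, uses $\E\big[\prod_{\ell=1}^k S^{(\ell)}\big]=M^k$ once to extract the term $\cP_{u^\perp}(M^kX)/\|M^kX\|=O(l_2^k\nu)$, and bounds the residual as a \emph{product} of two factors each controlled by the crude operator-norm estimate $\big\|\prod_{\ell=1}^k S^{(\ell)}-M^k\big\|_2\le\lambda^k((1+\theta)^k-1)\approx k\theta\lambda^k$; hence the residual is $O((k\theta)^2)$, and the choice $k=\lceil\log\theta/\log l_2\rceil$ balances the two pieces to give $(\theta\log(1/\theta)/(1-l_2))^2$. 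Your one-step transfer-operator expansion is a different route---potentially sharper, since naively it suggests one fewer logarithm---but, as you yourself anticipate, the quadratic correction from the normalization $1/\|Sv\|$ is exactly where the difficulty sits, and ``$O(\theta^2)$ per step summed over $T$ steps'' does not by itself close the argument.
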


The assumption on the samples $\{S^{(\ell)}\}_{\ell\ge 0}$ are rather
mild. The matrix whose eigenvector we are computing is the expectation
of $S^{(\ell)}$, the variability of $S^{(\ell)}$ is bounded in
operator norm, and finally the $S^{(\ell)}$ are sufficiently random
(in particular the do not share an eigenvector \emph{exactly}). The
latter can be ensured by adding arbitrarily small random perturbation to $S^{(\ell)}$.

At first sight, the assumption $\|x^{(0)}-u\| \le \theta/(1-l_2)$ 
on the initial condition might appear unrealistic:
the algorithm requires as input an approximation of the 
eigenvector $u$. A few remarks are in order. 
\emph{First,} the accuracy of the output, see Eq.~(\ref{eq:MainBound}),
is dramatically higher than on the input for $t=\Omega(1/\theta^2)$. 
In the following section, we will see that this is indeed the correct scaling of $t$ 
that achieves optimal perormance.
\emph{Second,} numerical
simulations show clearly that, for $\bx^{(t)}=x^{(t)}/\|x^{(t)}\|$,
the condition
$\|\bx^{(t)}-u\| \le \theta/(1-l_2)$ is indeed satisfied after a few
iterations.
 The heuristic argument is that the leading 
eigenvectors of $S^{(1)}$, $S^{(2)}$, \dots $S^{(t)}$ are roughly
aligned with $u$, and their second eigenvalues are significantly
smaller. Hence the scalar product $Z_t\equiv\<u,\bx^{(t)}\>$ behaves
approximately as a random walk with drift pushing out of $Z_t=0$.
Even if $Z_0=0$, random fluctuations produce a non-vanishing $Z_t$,
and the drift amplify this fluctuation exponentially fast.
The arguments in Section \ref{sec:proof2} further confirm 
this heuristic argument. For instance we will prove that 
the set $\|\bx^{(t)}-u\| \le \theta/(1-l_2)$ is absorbing,
in the sense that starting from such a set, the power iteration
keeps $\bx^{(t)}$ in the same set. On the other hand, starting from
any other point, there is positive probability of reaching the
absorbing set.
Finally, further evidence is provided by the fact that random initialization is sufficient 
for the eigenvalue estimation as proved in Section \ref{sec:value}.

As an example, we randomly generated a marix $M$ 
and computed $\bx^{(t)}=x^{(t)}/\|x^{(t)}\|$ according to \eqref{eq:Basic} 
using random sparsifications with $dn$ entries. 
Let $\tau=\arg\min_t \{\|\bx_{\rm rand}^{(t)}-\bx_{\rm u}^{(t)}\|\leq 0.001\}$. 
The subscript denotes two different initializations: 
$x^{(0)}_{\rm rand}$ is initialized with i.i.d Gaussian entries, 
and $x^{(0)}_{u}=u$. 
The following result illustrates that after a few iterations $t=O(\log(1/\theta))$, 
$\bx^{(t)}$ achieves error of order $\theta$ with $d=O(1/\theta^2)$ operations per node per round.
\begin{center} 
\vspace{-0.4cm}
\begin{tabular}{ | c || c | c | c | c | } 
\hline $d$        & 40 & 80 & 160 & 320 \\ 
\hline $\tau$    & 5.1 & 4.8 & 4.2 & 3.7 \\ 
\hline $\|\bx_{\rm rand}^{(\tau)}-u\|$ & 0.1110 & 0.0761 & 0.0521 & 0.0329 \\ 
\hline 
\end{tabular} 
\end{center}
\vspace{-0.3cm}
Finally, constructing a rough approximation of the leading 
eigenvector is in fact an easy task by multiplying the same 
sparsification $\Mt$ a few times. This claim is made precise by the 
following elementary remark.
\vspace{-0.2cm}
\begin{remark}
\label{rem:drift}
Assume that $x^{(0)}$ have i.i.d. components $\normal(0,1/n)$,
and define $x^{(t)} = \Mt[t]x^{(t-1)}$ where for 
$t\le t_*$, $\Mt[t] =\Ms$ is time independent and satisfies
$\|\Ms-M\|_2\le (\theta^2/2(1-l_2))\|M\|_2$.
If $t_*\ge 3\log(n/\theta)/(1-l_2-\theta)$, then 
$\|\bx^{(t_*)}-u\| \le \theta/(1-l_2)$ with probability at least
$1-1/n^2$.
\end{remark}
\vspace{-0.2cm}
The content of this remark is fairly intuitive:
the principal eigenvector of $\Ms$ is close to $u$, and
the component of $x^{(t)}$ along it grows exponentially faster than
the other components. A logarithmic number of iterations is then sufficient
to achieve the desired distance from $u$.

Finally, consider the assumption $\E[\Mt[\ell]]=M$.
In practice, it might be difficult to produce unbiased sparsifications:
does Theorem \ref{thm:eigenvector} provide any guarantee in this case?
The answer is clearly affirmative. Let $\E[\Mt[\ell]]=M'$ and
assume $\|M-M'\|_2\le \theta'\|M\|_2$. Then, it follows immediately from 
\eqref{eq:MainBound} that
\begin{eqnarray*}
\|\hu^{(t)}-u\| \leq \frac{18\theta}{(1-l_2)\sqrt{t\delta}} +  
	 12\Big(\frac{\theta\log(1/\theta)}{(1-l_2)}\Big)^2 + \frac{2\theta'}{1-l_2}\;,
\end{eqnarray*}
In other words the eigenvector approximation degrades gracefully with the 
quality of the sparsification.

%
%
\section{Examples and applications}
\label{sec:example}

In this section we apply our main theorem to specific settings
and point out possible extensions.

\subsection{Computation-accuracy tradeoff}

As mentioned above, Theorem \ref{thm:eigenvector}
characterizes the scaling of accuracy with the quality
of the sparsification procedure. For the sake 
of simplicity, we will consider the case in 
each entry of $M$ is set to $0$ independently with a fixed 
probability $1-d/n$, and non-zero entries are rescaled. 
In other words $S_{ij}=(n/d)M_{ij}$ with probability $(d/n)$,
and $S_{ij}=0$ otherwise.
This scheme was first analyzed in \cite{AM02}, but  
the estimate only holds for $d\ge (8\log n)^4$.
This condition was refined in \cite{KMO09}. Noting 
that for $d>\log n$ the maximum number of entries per row is of order
$d$, the latter gives
$$\|M-S\|_2\leq ({C}/{\sqrt{d}})\|M\|_2 \equiv \theta \|M\|_2\;.$$
In other words i.i.d. sparsification of the entries
yields $\theta = O(1/\sqrt{d})$. 
Further, denoting the total
complexity per node by $\chi$, we have $\chi \sim td$
either in terms of communication or of computation. 

In order to compute a computation-accuracy tradeoff
we need to link the accuracy to $t$ and $\theta$. 
Let us first consider the case in which a single sparsification
$S$ is used by letting $x^{(t)}= Sx^{(t-1)}$ and $\hu^{(t)} = 
x^{(t)}/\|x^{(t)}\|$. This procedure converges exponentially fast
to the leading eigenvector of $S$ which in turn 
satisfies $\|\hu^{(\infty)}-u\| \le 2\theta\le C'(1/\sqrt{d})$. 
Therefore if we denote by $\Delta_{\rm PM}\equiv \|\hu^{(t)}-u\|$ the 
corresponding error after $t$ iterations, we have 
\begin{align*}
\Delta_{\rm PM} \sim \theta + e^{-at}\, ,
\end{align*}
where we deliberately omit constants since we are only interested in 
capturing the scaling behavior.

Now we assume that we have a limit on the total complexity $\chi\sim td$,
and minimize the error $\Delta_{\rm PM}$
under this resources constraint, using the relation $\theta\sim 1/\sqrt{d}$. 
A simple calculation shows that the smallest error 
is achieved  when $t=\Theta(\log \chi)$  yielding 
\begin{align}
 \Delta_{\rm PM} \sim \sqrt{{(\log \chi)}/{\chi}} \;.
	\label{eq:timeindepend}
\end{align}
Next consider the algorithm developed in the present paper, \gossip_pca. 
The only element to be changed in our analysis is the relation
between accuracy and the parameters $\theta$ and $t$.  
From Theorem \ref{thm:eigenvector} we know that our estimator achieves 
error $\Delta_{\rm Gossip} = \|\hu^{(t)}-u\|$ that scales as
\begin{align*}
\Delta_{\rm Gossip} \sim {\theta}/{\sqrt{t}} \,+\, \big(\theta\log(1/\theta)\big)^2 \;,
\end{align*}
where again we omit constants. 
It is straightforward to minimize 
this expression under the constraints $\chi\sim td$,
and  $\theta\sim 1/\sqrt{d}$. 
The best scaling is achieved 
when $t=\Theta( \sqrt{\chi}/(\log\chi)^2)$ and 
$\theta=\Theta(1/(\chi^{1/4}\log\chi))$ yielding 
\begin{align}
\Delta_{\rm Gossip}\sim {1}/{\sqrt{\chi}} \;.
	\label{eq:timedepend}
\end{align}

Comparing \eqref{eq:timeindepend} and \eqref{eq:timedepend}, 
the scaling of the error with the per-node computation and communication 
remains roughly the same up to a logarithmic factor. 
Surprisingly, the way the best accuracy is achieved is significantly different. 
In the time-independent case (the standard power method), 
it is optimal to invest a lot of resources in one iteration
with a dense matrix $S$ that has $d=\Theta(\chi/\log \chi)$ non-zero 
entries per row.
In return, only a few iterations $t=\Theta(\log\chi)$ are required. 
Within the proposed time-dependent gossip 
approach, one should rather 
use a much sparser matrices $S^{(\ell)}$ with 
$d=\Theta(\sqrt{\chi}(\log \chi)^2)$ non-zero entries per row 
and use a larger number of iterations $t=\Theta(\sqrt{\chi}/(\log\chi)^2)$.

To illustrate how the two gossip algorithms compare in practice, 
we present results of a numerical experiment 
from the positioning application. 
From $1000$ nodes placed in the $2$-dimensional unit square uniformly at random, 
we define the matrix of squared distances.  
Let $p_i$ be the position of node $i$, then $D_{ij}=\|p_i-p_j\|^2$. 
After a simple centering operation, 
the top two eigenvectors reveal 
the position of the nodes up to a rigid motion (translation and/or rotation) \cite{OKM10}. 
We can extend the gossip algorithms to estimate the first two eigenvectors 
as explained in Section \ref{sec:Extensions}. 
Let the columns of $U\in\reals^{1000\times 2}$ be the first two eigenvectors and 
$\|\cdot\|_F$ be the Frobenius norm of a matrix such that $\|A\|_F^2=\sum_{i,j}A_{ij}^2$.
Denote by $\Delta(d)=(1/\sqrt{2})\|U-\hU\|_F$ the resulting error for a particular choice of $d$. 

To simulate a simple gossip setting with constrained communication, 
we allow $d$ to be either $50$ or $500$. 
For the two gossip algorithms and  
for each value of the total complexity $\chi$, 
we plot the minimum error achieved 
using one of the two allowed communication schemes: $\min_{d\in\{50,500\}}\Delta(d)$.
For comparison, performance of the power method  
on complete dense matrices is also shown (see Section \ref{sec:intro}).
As expected from the analysis, 
\gossip_pca achieved smaller error with sparse matrices ($d=50$) for all values of $\chi$. 
When a single sparsification is used, 
there is a threshhold at $\chi=14500$, above which 
a dense matrix ($d=500$) achieved smaller error. 
Notice a discontinuity of the derivative at the threshold. 

\begin{figure}
 \centering
 \includegraphics[width=6cm]{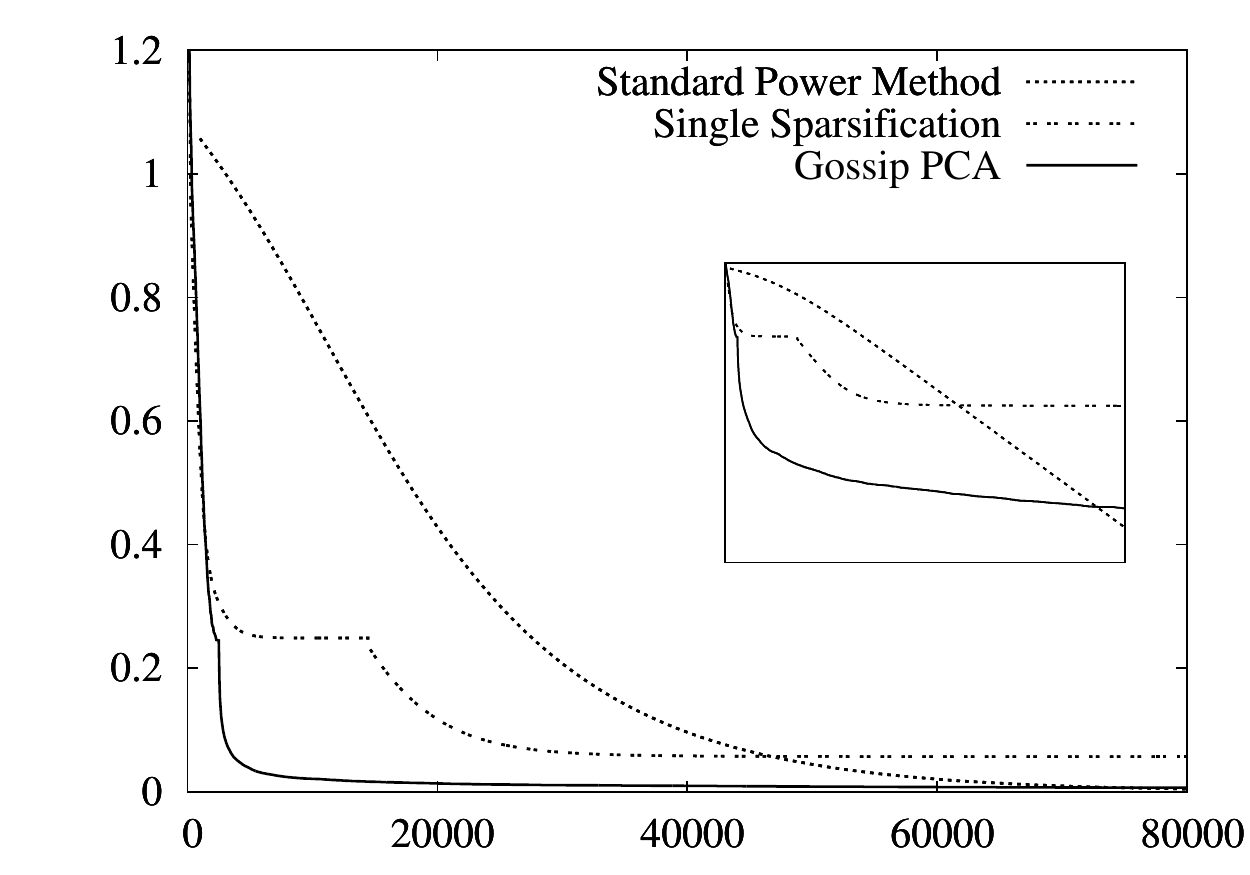}
 \put(-167,34){\begin{sideways}\small Estimation error\end{sideways}}
 \put(-130,-5){\small Total complexity per node}
 \caption{Eigenvector estimation error against complexity. In the inset the result is plotted in log-scale.}
 \label{fig:positioning}
  \vspace{-0.5cm}
\end{figure}

%
%
\subsection{Comparison with gossip averaging}
\label{sec:Averaging}

Gossip methods have been quite successful in computing symmetric
functions of data $\{x^{(0)}_i\}_{1\le i\le n}$ available at the nodes.
The basic primitive in this setting
is a procedure computing the average
$\sum_{i=1}^nx^{(0)}_i/n$. This algorithm shares similarities with 
the present one. One recursively applies independent random 
matrices $P^{(1)}$, $P^{(2)}$, \dots according to:
\begin{eqnarray}
x^{(t)}=P^{(t)} x^{(t-1)}\, ,\label{eq:Averaging}
\end{eqnarray}
where $P^{(t)}$ is the matrix that averages entries $i(t)$ and $j(t)$
of $x^{(t)}$ (in other words it is the identity outside a $2\times 2$
block corresponding to coordinates $i(t)$ and $j(t)$). 

It is instructive to compare the two problems. 
In the case of simple averaging, one is interested in approximating
the action of a projector $P$, namely the matrix with 
all entries equal to $1/n$. In eigenvector calculations
the situation is not as simple, because the matrix of interest $M$
is not a simple projector. In both cases we approximate this action
by products of i.i.d. random matrices whose 
expectation matches the matrix of interest. However in averaging,
the leading eigenvector of $P$ is known \emph{a priori},
it is the constant vector $u = (1/\sqrt{n},\dots,1/\sqrt{n})$.
As a consequence, sparsifications  $P^{(t)}$ can be constructed in such a way
that $P^{(t)} u =u$ with probability $1$.

Reflecting these differences, the behavior of the present algorithm 
is qualitatively different from gossip averaging. Within the 
latter $x^{(t)}$ converges asymptotically to the constant vector, whose
entries are equal to $\sum_{i=1}^nx^{(0)}_i/n$. The convergence rate depends
on the distribution of the sparsification $P^{(t)}$. In 
\gossip_pca, the sequence of normalized vectors $x^{(t)}/\|x^{(t)}\|$
does not converges to a fixed point. The distribution
of  $x^{(t)}/\|x^{(t)}\|$  instead converges to
a non-trivial stationary distribution whose mean is approximated by 
$\hu^{(t)}$. An important step in the proof of Theorem
\ref{thm:eigenvector} consists in showing that the mean of this 
distribution is much closer to the eigenvector than a typical 
vector drawn from it.

%
%
\subsection{Extensions}
\label{sec:Extensions}

It is worth pointing out some extensions of our results, and 
interesting research directions:

{\em More than one eigenvector.} In many applications of interest, 
we need to compute $r$ leading eigenvectors, where $r$ is larger than one, 
but typically a small number. 
In the case of positioning wireless devices, 
$r$ is consistent with the ambient dimensions, hence $r=3$.
As for the standard power iteration, the algorithm 
proposed here can be generalized to this problem.
At iteration $t$, the algorithm keeps track of 
$r$ orthonormal vectors $x^{(t)}(1)$, \dots $x^{(t)}(r)$. In the distributed 
version, node $i$ stores the $i$-th coordinate of each 
vector, thus requiring $O(r)$ storage capability.
The vectors are updated by letting $\widetilde{x}^{(t)}(a) = \Mt[t]x^{(t)}(a)$.
and then orthonormalizing $\widetilde{x}^{(t)}(1)$, 
\dots $\widetilde{x}^{(t)}(r)$
to get $x^{(t)}(1)$, \dots $x^{(t)}(r)$. Orthonormalization can be done
locally at each node if it has access to the Gram matrix
$G = (G_{ab})_{1\le a,b\le r}$
\begin{eqnarray}
G_{ab} \equiv \frac{1}{n}\sum_{i=1}^n \widetilde{x}_i^{(t)}(a)
\widetilde{x}_i^{(t)}(b)
\end{eqnarray}
This can be computed via gossip averaging, using messages consisting of 
$r(r+1)/2$ real numbers. Therefore the total communication 
complexity per node per 
iteration is of order $r^2\log(1/\ve)$ to achieve precision $\ve$.
Indeed, such distributed orthonormalization procedure was studied 
in \cite{KM08} for decentralized implementation of the standard power method.

\emph{Richer stochastic models for random sparsification.}
Our main result holds under the assumption that $\Mt[1]$,
$\Mt[2]$,\dots,$\Mt[t]$ are i.i.d. sparsifications of the matrix $M$.
This is a reasonable assumption when the random sparsifications
are generated by the algorithm itself. The same assumption can also model
short time-scale link failures, as due for instance to fast fading
in a wireless setting. On the other hand, a more accurate
model of link failures would describe $\Mt[1]$,
$\Mt[2]$,\dots,$\Mt[t]$ as a stochastic process. We think that our
main result is generalizable to this setting under appropriate
ergodicity assumptions on this process. More explicitly,
as long as the underlying stochastic process mixes (i.e. loses
memory of its initial state) on time scales shorter than $t$,
the qualitative features of Theorem \ref{thm:eigenvector}
should remain unchanged. 
Partial support of this intuition is provided 
by the celebrated Oseledets' multiplicative ergodic theorem
that guarantees convergence the exponential growth rate of $\|x^{(t)}\|$
in a very general setting \cite{Oseledets} (namely within the context of 
ergodic dynamical systems).

\emph{Communication constraints: Rate and noise.} 
In a decentralized setting, 
it is unavoidable to take into consideration communication rate
constraints and communication errors. 
The presence of errors implies
that the actual matrix used at iteration $t$ is not
$\Mt[t]$ but is rather a perturbation of it.
The effect of noise can then be studied through Theorem \ref{thm:eigenvector}.
Rate constraints imply that real numbers cannot be 
communicated through the network, unless some quantization is used.
An approach consists in using some form of randomized rounding
for quantization. 
In this case, the effect of quantization can also be 
studied through Theorem \ref{thm:eigenvector}.
This implies that, roughly speaking,
the error in the eigenvector computed with this approach 
scales quadratically in the quantization step.
(Notice that quantization also affects the vector on the right-hand side of
Eq.~(\ref{eq:Basic}), but we expect this effect to be roughly of the same 
order as the effect of the quantization of $\Mt[t]$.)
Further, when 
the matrix $M$ itself is sparse, or a fixed sparsification 
$\Ms$ is used within the ordinary power method, 
Theorem \ref{thm:eigenvector} can be used to study 
the effect of noise and quantization.

%
%
\section{Related Work} 
\label{sec:related}

The need for spectral analysis of massive data sets 
has motivated a considerable effort towards the development 
of randomized low complexity methods. A short 
sample of the theoretical literature in this topic includes 
\cite{DFKVV99,AM02,DK03,FKV04,DM05,DKM06}. 
Two basic ideas are developed in this line of research: 
\emph{sparsify} of the original matrix $M$ to 
reduce the cost of matrix-vector multiplication; \emph{apply} 
the matrix $M$ to a random set of vectors in order to approximate
its range. 
Both of these approaches are developed in a centralized setting 
where a single dataset is sent to a central processor.  
While this allows for more advanced algorithms than power iteration, 
these algorithms might not be directly applicable in 
a decentralized setting considered in this paper, 
where each node has limited computation and comunication capability 
and the datasets are often extremely large such that the data has to be stored in a distributed manner. 

Fast routines for low-rank approximation are useful in many
areas of optimization, scientific computing and simulations.
Hence similar ideas were developed in that literature: 
we refer to \cite{TroppReview} for references and an
overview of the topic.

Kempe and McSherry \cite{KM08} studied a decentralized power iteration
algorithm for spectral analysis. 
They considered matrices that are inherently sparse. 
Therefore, no sparsification is used and 
all the entries are exploited at every iteration. 
Hence, their algorithm eventually computes the optimal low-rank approximation exactly.
The same paper introduced the decentralized 
orthonormalization mentioned in Section \ref{sec:Extensions}.

The idea of using a sequence of distinct sparsifications
to improve the accuracy of power iteration was not studied 
in this context. Somewhat related is the basic idea in randomized
algorithms for gossip averaging \cite{BoydGossip}. 
As discussed in Section \ref{sec:Averaging}, these algorithms operate
by applying a sequence of i.i.d. random matrices to an initial vector
of data. The behavior and analysis is however considerably simplified by
the fact that these matrices share a common leading eigenvector, that is known 
\emph{a priori}, namely the eigenvector $u=(1/\sqrt{n},\dots,1/\sqrt{n})$.
Overviews of this literature is provided by \cite{ShahReview}
and \cite{DimakisReview}. 
Quantization is an important concern in the practical implementation
of gossip algorithms, and has been studied in particular in the context
of consensus \cite{Quantized1,Quantized2}. As discussed in the 
last section, the effect 
of randomized quantization can also be included in the present setting.

Finally, there has been recent progress in the development
of  sparsification schemes that imply better
error guarantees than in Eq.~(\ref{eq:NormBound}), see for instance 
\cite{SpielmanSparse}. It would be interesting to study the effect of 
such sparsification methods in the present setting.

%
%

\section{Proof of the Main Theorem}
\label{sec:proof2}

In this section, we analyze the quality of the estimation
$\hu^{(t)}$ provided by our algorithm and prove 
Theorem \ref{thm:eigenvector}. Before diving into
the technical argument, it is worth motivating the main ideas. 
We are interested in analyzing the random trajectory
$\{x^{(t)}\}_{t\ge 0}$ defined as per Eq.~(\ref{eq:Basic}). 
One difficulty is that this process cannot 
be asymptotically stationary, since $x^{(t)}$ gets multiplied
by a random quantity. Hence it will either grow exponentially
fast or shrink exponentially fast.

A natural solution to this problem would be to track
the normalized vectors $\tx^{(t)}\equiv x^{(t)}/\|x^{(t)}\|$.
Also this approach presents some technical difficulty
that can be grasped by considering the special 
case in which $S^{(t)}=M$ for all $t$
(no sparsification is used). Neglecting exceptional initial conditions
(such that $\<x^{(0)},u\>=0$)
this sequence can either converge to $u$ or to $-u$. 
In particular, it cannot be uniformly convergent.
The right way to eliminate this ambiguity is to track the unit 
vectors $\tx^{(t)}$ `modulo overall sign'. The space of unit
vectors modulo a sign is the projective space $\P_n$,
that we will introduce more formally below. 

We are therefore naturally led to consider the random trajectory
$\{\bx_t\}_{t\ge 0}$ --indeed a Markov chain-- taking values
in the projective space $\bx_t\in \P_n$. 
We will prove that two important
facts hold under the assumptions of Theorem \ref{thm:eigenvector}:
$(1)$ The chain converges quickly to a stationary 
distribution $\mu$; $(2)$ The distance between the baricenter of 
$\mu$ and $u$ is of order $\theta^2$. 
Fact $(1)$ implies that $\hu^{(t)}$, cf. Eq.~(\ref{eq:IterAverage}), 
is a good approximation of the baricenter of $\mu$.
Fact $(2)$ then implies Theorem \ref{thm:eigenvector}.

In the next subsection we will first define formally the process
$\{\bx_t\}_{t\ge 0}$, and provide some background
(Section \ref{sec:MCGeneral}), and then 
present the formal proof (Section
\ref{sec:eigenvectorproof}), along the lines sketched above.
%
%
\subsection{The Kesten-Furstenberg Markov chain}
\label{sec:MCGeneral}

As anticipated above, we shall denote by 
$\P_n$ the projective space in $\reals^n$.
This is defined as the space of lines through the origin in $\reals^n$. 
Equivalently, $\P_n$ is  the space of equivalence classes 
in $\reals^n\setminus\{0\}$ for the equivalence relation  $\sim_{\P}$,
such that $x\sim_{\P} y$ if and only if 
$x=\lambda y$ for some $\lambda\in\reals\setminus\{0\}$.
This corresponds with the description given above,
since it coincides with  the space of equivalence classes 
in $S^n\equiv\{x\in\reals^n\,:\, \|x\|=1\}$ 
for the equivalence relation  $\sim_{\P}$,
such that $x\sim_{\P} y$ if and only if 
$x=\lambda y$ for some $\lambda\in\{+1,-1\}$.

In the future, we denote elements of $\P_n$ by boldface letters
$\bx,\by,\bz,\dots$ and the corresponding representatives
in $\reals^n$ by $x,y,z,\dots$. 
We generally take these representatives to have unit norm. 
We use a metric on this space defined as 
\begin{eqnarray*}
d(\bx,\by) \equiv \sqrt{1-\<x,y\>^2}\, .
\end{eqnarray*}
Random elements in $\P_n$ will be denoted by boldface capitals
$\bX,\bY,\bZ,\dots$.

An invertible matrix $\Ms\in\reals^{n\times n}$ acts naturally on $\P_n$,
by mapping $\bx\in\P_n$ (with representative $x$)
to the element $\by\in \P_n$ with representative of $Sx$ 
(namely the line through $Sx$, or the unit vector $Sx/\|Sx\|$
modulo sign). We will denote this action by writing
$\by = S\bx$, but emphasize that it is a non-linear map, since it implicitly
involves normalization.

Given a sequence of i.i.d. random matrices 
$\{\Mt[t]\}_{t\ge 1}$  that are almost surely invertible,
with common distribution $p_S$, 
we define the Markov chain $\{\bX_t\}_{t\ge 0}$ with values in $\P_n$ by 
letting
\begin{eqnarray}
\bX_t = \Mt[t]\Mt[t-1]\cdots \Mt[1]\bX_0 \;,\label{eq:Kesten}
\end{eqnarray}
for all $t\ge 1$.
We assume the following conditions:
\begin{enumerate}
\vspace{-0.4cm}
\item[\LA.] There exists no proper linear subspace $V\subseteq \reals^n$
such that $\Mt[1]V\subseteq V$ almost surely.
\vspace{-0.1cm}
\item[\LB.] There exist a sequence $\{\Mt[t]\}_{t\ge 1}$ in the support 
of $p_S$, such that letting $\Ms^T\equiv \Mt[T]\Mt[T-1]
\cdots\Mt[1]$, we have $\sigma_2(\Ms^T)/\sigma_1(\Ms^T)\to 0$
as $T\to\infty$.
\vspace{-0.4cm}
\end{enumerate}
It was proved in \cite{LePage}, 
that, under the assumptions $\LA$ and $\LB$, 
there exists a unique measure $\mu$ on $\P_n$ that is stationary for the Markov chain $\{\bX_t\}$. The Markov chain converges to the stationary
measure as $t\to\infty$ (we refer to the Appendix for a formal statement).

For the purpose of proving Theorem \ref{thm:eigenvector}, 
uniqueness of the stationary measure is not enough: we will need
to control the rate of convergence to stationarity.
We  present here a general theorem to bound the rate of convergence, 
and we will apply it to the chain of interest in the next section. 
Let us start by stating two more assumptions. We denote by
$\G \subseteq \P_n$ a  (measurable) subset of the projective space,
and assume that there exists a constant $\rho\in(0,1)$  
such that
\begin{enumerate}
\vspace{-0.4cm}
\item[\Aa.] For any $\bx \in \G$, $\Mt[t]\bx \in \G$ almost surely.
\vspace{-0.1cm}
\item[\Ab.] For any $\bx\neq\by\in \G$, $\E\left[d(\Mt[t]\bx,\Mt[t]\by)
\right] \le \rho \, d(\bx,\by)$.
\vspace{-0.4cm}
\end{enumerate}
We then have the following. 
\begin{theorem}\label{thm:muproperty}
Assume conditions $\LA$ and $\LB$ hold, together with 
$\Aa$ and $\Ab$.  Denote by $\mu$ the unique stationary measure 
of the Markov chain $\{\bX_t\}_{t\ge 0}$.
Then 
\begin{eqnarray*}
	\mu(\G^c) = 0\;.
\end{eqnarray*}
Further, if $\bX_{0}\in \G$ then for any $L$-Lipschitz 
function\footnote{We say that $f$ is $L$-Lipschitz if,
for any $\bx,\by\in\P_n$, $|f(\bx)-f(\by)|\le L\,d(\bx,\by)$.} 
$f:\P_n\to\reals$, we have
\begin{eqnarray*}
	\big|\E[f(\bX_t)]-\stat(f)\big|\le \, L\,\rho^t \;.
\end{eqnarray*}
\end{theorem}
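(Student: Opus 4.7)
The plan is to exploit the absorbing property $\Aa$ together with the expected-distance contraction $\Ab$, using as input the uniqueness of the stationary measure supplied by the LePage theorem under $\LA,\LB$. The core mechanism is a \emph{synchronous coupling}: driving two chains by the \emph{same} sequence $\{\Mt[t]\}$ causes their projective distance to shrink in expectation at rate $\rho$. Both conclusions of the theorem will follow from this, with slightly different uses of the coupling.

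For the first claim $\stat(\G^c)=0$, I would equip the space of probability measures supported on $\G$ with the Wasserstein-$1$ distance
\begin{equation*}
W(\nu_1,\nu_2) \;=\; \inf_{\pi}\,\int d(\bx,\by)\,\pi(d\bx,d\by),
\end{equation*}
the infimum being over all couplings $\pi$ with marginals $\nu_1,\nu_2$. Condition $\Aa$ implies that the Markov operator $P$ of the chain maps this space into itself. Taking any coupling of $\nu_1,\nu_2$ with marginals $(\bX_0,\bY_0)$ and pushing both through the same random matrix $\Mt[1]$, $\Ab$ gives
\begin{equation*}
\E\big[d(\Mt[1]\bX_0,\Mt[1]\bY_0)\big] \;\le\; \rho\,\E\big[d(\bX_0,\bY_0)\big],
\end{equation*}
and minimizing over couplings yields $W(P\nu_1,P\nu_2)\le \rho\,W(\nu_1,\nu_2)$. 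Banach's fixed-point theorem then produces a unique $P$-invariant measure $\stat_{\G}$ supported on $\G$. Since $\LA,\LB$ force uniqueness of the stationary measure on all of $\P_n$, this invariant measure must coincide with $\stat$, giving $\stat(\G^c)=0$.

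For the convergence bound, given $\bX_0\in\G$, I would couple $\{\bX_t\}$ with a stationary copy $\{\bY_t\}$, $\bY_0\sim\stat$, by driving both with the same realization of $\{\Mt[t]\}$. By the first claim, $\bY_0\in\G$ almost surely, and $\Aa$ then keeps $\bY_t\in\G$ for every $t\ge 0$. Conditioning on $(\bX_{t-1},\bY_{t-1})$ and applying $\Ab$ inductively,
\begin{equation*}
\E[d(\bX_t,\bY_t)] \;\le\; \rho\,\E[d(\bX_{t-1},\bY_{t-1})] \;\le\;\cdots\;\le\; \rho^t\,\E[d(\bX_0,\bY_0)] \;\le\; \rho^t,
\end{equation*}
where in the last step I use $d(\cdot,\cdot)\le 1$ on $\P_n$. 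For any $L$-Lipschitz $f$ this gives
\begin{equation*}
\big|\E f(\bX_t)-\stat(f)\big| \;=\; \big|\E[f(\bX_t)-f(\bY_t)]\big| \;\le\; L\,\E[d(\bX_t,\bY_t)] \;\le\; L\rho^t,
\end{equation*}
which is the advertised bound.

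The main obstacle is the Banach fixed-point step: one must verify completeness of $(\mathcal{P}(\G),W)$, which is routine when $\G$ is closed. This is the only case needed in the Gossip PCA application, where $\G$ will be a metric ball around the eigenvector $u$. For a general measurable $\G$ one would instead work on $\mathcal{P}(\overline{\G})$ and use continuity of $\bx\mapsto \Mt[t]\bx$ to extend $\Aa,\Ab$ to the closure, then invoke $\Aa$ to show that $\stat$ cannot place mass on $\overline{\G}\setminus\G$.
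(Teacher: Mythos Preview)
Your argument is correct, and for the convergence bound (the second claim) it is essentially identical to the paper's: a synchronous coupling driven by the same matrix sequence, iterate the contraction $\Ab$ to get $\E[d(\bX_t,\bY_t)]\le\rho^t$, then apply the Lipschitz bound. The paper phrases the iteration via submultiplicativity of $t\mapsto\sup_{\bx_0\neq\by_0\in\G}\E[(d(\bX_t,\bY_t)/d(\bX_0,\bY_0))^{\lambda}]$, but this is only cosmetic.

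For $\stat(\G^c)=0$ the paper takes a shorter route than your Wasserstein/Banach fixed-point construction. It simply starts the chain at some $\bx_0\in\G$; by $\Aa$ every $\bX_t$ lies in $\G$, and by LePage's convergence theorem the law of $\bX_t$ converges (against bounded Lipschitz test functions) to the unique stationary $\stat$. Since all the approximating laws are supported on $\G$, so is the limit. Your approach instead \emph{builds} an invariant measure on $\G$ via the $W_1$-contraction and then invokes uniqueness on $\P_n$. Both are valid; the paper's version uses more of LePage (convergence, not just uniqueness) but avoids the completeness discussion, while yours is more self-contained at the cost of needing $\G$ closed (or your closure workaround). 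In the application $\G$ is a closed ball, so the point is moot, and you are right to flag it explicitly --- the paper's argument tacitly needs the same hypothesis to pass to the limit.
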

The proof of this Theorem uses a coupling technique analogous 
to the one of \cite{LePage}. We present it in the appendix for greater 
convenience of the reader.
%
%
\subsection{Proof of Theorem \ref{thm:eigenvector}}
\label{sec:eigenvectorproof}

In this section we analyze the \gossip_pca algorithm using the 
general methodology developed above.
In particular, we consider the Markov chain (\ref{eq:Kesten})
whereby $\{S^{(\ell)}\}_{1\leq \ell\leq t}$ are i.i.d. sparsifications 
of $M$ satisfying the  conditions: 
$(i)$ $\|S^{(\ell)}-M\|_2\le \theta \|M\|_2$;
$(ii)$ $\E[S^{(\ell)}]=M$;
$(iii)$ $S^{(\ell)}$ is almost surely non-singular.
Throughout the proof, 
we let $\bu\in\P_n$ denote an element of $\P_n$ represented by $u$.

Note that the conditions $\LA$ stated in the previous section
holds by assumption in  Theorem \ref{thm:eigenvector}.
Further let $\lambda_1(\ell)$ and $\lambda_2(\ell)$ the largest
and second largest singular values of $\Mt[\ell]$. 
By assumption $(i)$, and since by hypothesis
$\theta\leq(1/40)(1-l_2)^{3/2}$, implying $\|\Mt[\ell]-M\|_2\le
(\lambda-|\lambda_2|)/2$, we have  
$|\lambda_1(\ell)/\lambda_2(\ell)|$ $>1$ almost surely.
Hence by taking $\Mt[1]=\Mt[2]=\dots=\Mt[T]=\dots$ in the support
of $p_S$, we have that condition $\LB$ holds as well. 

By applying the main theorem in \cite{LePage}
(restated in the Appendix), we conclude
that there exists a unique stationary distribution $\mu$
for the Markov chain $\{\bX_t\}$, and that the chain converges to it.

We next want to apply Theorem \ref{thm:muproperty} to bound the 
support and the rate of 
convergence to this stationary distribution.  
We define the `good' subset $\G\subseteq\P_n$ by
\begin{align}
\G = \left\{\bx\in\P_n : d(\bx,\bu) \leq \frac{2\theta}{1-l_2} \right\}\;. \label{eq:defG}
%
\end{align}
Our next lemma shows 
 assumptions $\Aa$ and $\Ab$ are satisfied in this set $\G$, 
with a very explicit expression for the contraction coefficient $\rho$.
\begin{lemma}
\label{lem:goodset}
	Under the hypothesis of Theorem \ref{thm:eigenvector}, 
	for any $\bx\in\G$ we have $S^{(\ell)}\bx\in\G$. 
	Further, for any $\bx\neq\by\in\G$, 
	letting 
	$\rho \equiv 1-(4/5)(1-l_2)\in (0,1)$, we have
	\begin{align*}
		\E d(S^{(\ell)}\bx,S^{(\ell)}\by) \leq \rho\,d(\bx,\by) \;.
	\end{align*}
\end{lemma}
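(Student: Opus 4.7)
The plan is to treat both parts of the lemma through a single pointwise perturbation analysis based on the decomposition $S := S^{(\ell)} = M + E$ with $\|E\|_2 \le \theta\lambda$ and $\lambda = \lambda_1$. The expectation in the contraction claim will not in fact be exploited: I aim for a deterministic bound $d(S\bx,S\by) \le \rho\,d(\bx,\by)$ valid for every realization of $S^{(\ell)}$, from which the bound on $\E[d(S\bx,S\by)]$ follows trivially.

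For the invariance claim, I would pick a unit-norm representative $x = \alpha u + \beta v$ of $\bx \in \G$ with $v\perp u$ unit, $\alpha\ge 0$, and $\beta = d(\bx,\bu)\le 2\theta/(1-l_2)$. Since $u$ is a $\lambda$-eigenvector of $M$ and $Mv\in u^\perp$ with $\|Mv\|\le l_2\lambda$, I obtain $|\langle Sx,u\rangle|\ge \lambda(\alpha - \theta)$ and $\|P_{u^\perp}Sx\|\le \lambda(l_2\beta + \theta)$, hence the pointwise estimate
\[
d(S\bx,\bu) \;\le\; \frac{\|P_{u^\perp}Sx\|}{|\langle Sx,u\rangle|} \;\le\; \frac{l_2\beta + \theta}{\alpha - \theta}.
\]
Plugging in $\beta = 2\theta/(1-l_2)$ and rearranging, the required inequality $d(S\bx,\bu)\le 2\theta/(1-l_2)$ reduces to $1-\alpha \le (1-l_2)/2 - \theta$; using $1-\alpha \le \beta^2/2 \le 2\theta^2/(1-l_2)^2$ together with the hypothesis $\theta \le (1-l_2)^{3/2}/40$, this is routine to verify.

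For the contraction claim, I would use the wedge-product identity $d(S\bx,S\by) = \|Sx\wedge Sy\|/(\|Sx\|\,\|Sy\|)$ together with (i) the bound $\|Sx\wedge Sy\|\le \sigma_1(S)\sigma_2(S)\,d(\bx,\by)$, coming from the operator norm of the induced action on bivectors, and (ii) the lower bound $\|Sx\|\ge \sigma_1(S)|\langle x, v_1(S)\rangle|$ (analogously for $y$), where $v_1(S)$ is the top right singular vector of $S$. This yields
\[
d(S\bx,S\by) \;\le\; \frac{\sigma_2(S)/\sigma_1(S)}{|\langle x, v_1(S)\rangle|\,|\langle y, v_1(S)\rangle|}\;d(\bx,\by).
\]
Weyl's inequality gives $\sigma_2(S)/\sigma_1(S) \le (l_2+\theta)/(1-\theta)$, while Wedin's theorem applied to $S = M + E$ gives $\sin\angle(v_1(S),u)\le \theta/((1-l_2)-\theta)$. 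Combined with $\bx,\by\in\G$ and the spherical triangle inequality, both inner-product factors in the denominator are at least $1 - O\bigl(\theta^2/(1-l_2)^2\bigr)$. Inserting the hypothesis $\theta \le (1-l_2)^{3/2}/40$, which forces $\theta^2/(1-l_2)^2 \le (1-l_2)/1600$ and $\theta \le (1-l_2)/40$, the overall prefactor stays at most $l_2 + (1/10)(1-l_2)$, well below $\rho = l_2 + (1/5)(1-l_2)$.

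The main obstacle is numerical bookkeeping: the target $\rho$ leaves slack only $(1/5)(1-l_2)$ above $l_2$, while $\sigma_2(S)/\sigma_1(S)$ degrades linearly in $\theta$ and the denominator correction degrades as $\theta^2/(1-l_2)^2$. The numerical constant $40$ and the exponent $3/2$ in the hypothesis on $\theta$ are exactly what make everything fit: the extra $(1-l_2)^{1/2}$ factor tucked inside the $3/2$ exponent is precisely what converts the $\theta^2/(1-l_2)^2$ corrections into genuine $(1-l_2)$-size terms rather than merely $o(1)$. Careful propagation of these prefactors through the chain of inequalities is the only non-routine aspect of the proof.
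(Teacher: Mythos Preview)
Your proposal is correct; both the invariance and the contraction arguments go through with the constants you indicate, and the deterministic (pointwise) bound you prove is indeed stronger than the expected-value statement in the lemma.

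The invariance argument is essentially the paper's: both track $\|\cP_{u^\perp}(Sx)\|$ versus the $u$-component and show the ratio stays below $2\theta/(1-l_2)$. The contraction argument, however, is genuinely different. The paper proves a technical ``Contraction Lemma'' that separately bounds $\|\cP_{u^\perp}(Qx/\|Qx\|-Qx'/\|Qx'\|)\|$ and $\|\cP_u(Qx/\|Qx\|-Qx'/\|Qx'\|)\|$ by direct perturbation of the normalized difference, then reassembles these into a bound on $(1-\<y,y'\>^2)/(1-\<x,x'\>^2)$. Your route via $d(S\bx,S\by)=\|Sx\wedge Sy\|/(\|Sx\|\,\|Sy\|)$, bounding the numerator by $\sigma_1(S)\sigma_2(S)\,d(\bx,\by)$ through the induced norm on $\wedge^2\reals^n$ and the denominator via Wedin's theorem, is more conceptual: it cleanly isolates the contraction mechanism as the singular-value ratio $\sigma_2/\sigma_1$, corrected only by how well $x,y$ align with the top singular direction. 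The paper's approach is more self-contained (no exterior algebra, no black-box Wedin) but pays for this with substantially more componentwise bookkeeping; your approach trades that for two standard perturbation citations and a one-line exterior-algebra fact. Both yield pointwise contraction, so neither exploits the expectation.
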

The proof of this lemma can be found in the next section. 
As a consequence of this lemma we can apply Theorem \ref{thm:muproperty}.
In particular, we conclude  that $\mu$ is supported on the 
good set $\G$.

Next consider the estimate $\hu^{(t)}\in\reals^n$ produced 
by our algorithm, cf. Eq.~(\ref{eq:IterAverage}). 
 This is given in terms of the 
Markov chain on $\P_n$ by
\begin{align*}
\hu^{(t)} = \frac{\sum_{\ell=1}^t f(\bX_\ell)}{\|\sum_{\ell=1}^t f(\bX_\ell)\|} \;,
\end{align*} 
where we define $f:\P_n\mapsto \reals^n$ such that 
$f(\bx)$ 
is a representative of $\bx$ satisfying $\|f(\bx)\|=1$ and 
$\<u, f(\bx)\>\geq 0$. 
We use $\bU_t\in\P_n$ to denote an element in $\P_n$ represented by $\hu^{(t)}$. 

Let $\mu(f) = \int f(\bx)\mu(\de \bx)\in\reals^n$ be 
the expectation of $f(\,\cdot\,)$
with respect to the stationary distribution  
(informally, this is the baricenter of $\mu$).
With a slight abuse of notation, we
let $\mu(f)$ denote the corresponding element in $\P_n$ as well. 
Then, by the triangular inequality, we have, for any $t$, 
\begin{eqnarray*}
 	d(\bu,\bU_t) \leq d(\bu,\mu(f))+d(\bU_t,\mu(f)) \;.
\label{eq:Decomposition}  
\end{eqnarray*}
The left hand side is the error of our estimate of the leading eigenvector.
This is decomposed in two contributions: a deterministic one,
namely $d(\bu,\mu(f))$, 
that gives the distance between the leading eigenvector and 
the baricenter of $\mu$, and a random one i.e. 
$d(\bU_t,\mu(f))$, that measures the distance between the average of 
our sample and the average of the distribution.

In order to bound $d(\bU_t,\mu(f))$, 
we use the following fact that holds for any $a,b\in\reals^n$
\begin{eqnarray}
\sqrt{1-\frac{\<a, b\>^2}{\|a\|^2 \|b\|^2}} 
	\leq \frac{\|a-b\|}{\sqrt{\|a\|\|b\|}} \;.\label{eq:SimpleIneq}
\end{eqnarray}
This follows immediately from 
$2\|a\|\|b\|-2\<a, b\> \leq \|a-b\|^2$.
We apply this inequality to
$a=\mu(f)$ and $b=(1/t)\sum_{\ell=1}^t f(\bX_\ell)$. 
We need therefore to lower bound $\|\mu(f)\|$ and 
$\|(1/t)\sum_{\ell=1}^t f(\bX_\ell)\|$ and to upper bound 
$\|(1/t)\sum_{\ell=1}^t f(\bX_\ell)-\mu(f)\|$.

Denote by $\cP_u$ the orthogonal projector onto $u$.
From Theorem \ref{thm:muproperty}, we know that $\mu(\G^c)=0$. Hence, 
using $\theta<(1/40)(1-l_2)^{3/2}$, we have
$\|\mu(f)\| \geq \|\mu(\cP_u(f))\| \geq \sqrt{1-1/400}$. 
Similarly since $\bX_{0}\in\G$, we have by $\Aa$ that $\bX_{\ell}\in\G$ 
for all $\ell$, and therefore 
$\|(1/t)\sum_{\ell=1}^t f(\bX_\ell)\| \geq \sqrt{1-1/400}$. 
We are left with the task of bounding $\|a-b\|$.
This is done in the next  
lemma that uses in a crucial way Theorem \ref{thm:muproperty}.
\begin{lemma}
	\label{lem:NormConvergence}
	Under the hypothesis of Theorem \ref{thm:eigenvector}  
	\begin{align*}
	\bE\Big\|\frac1t\sum_{\ell=1}^t f(\bX_\ell) - \stat(f)\Big\|^2 \leq \frac{70\theta^2}{(1-l_2)^2t}\;.
	\end{align*}
\end{lemma}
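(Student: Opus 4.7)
The plan is to expand $\mathbb{E}\|\bar{Y}\|^2$ (where $\bar{Y} = \frac{1}{t}\sum_{\ell=1}^t Y_\ell$ and $Y_\ell \equiv f(\bX_\ell)-\mu(f)$) into diagonal variance plus off-diagonal covariance terms, control the diagonal by the diameter of the absorbing set $\G$, and control the covariances by the geometric contraction rate $\rho = 1-(4/5)(1-l_2)$ established in Lemma \ref{lem:goodset}. First, I would use two facts already in hand: by the hypothesis on $x^{(0)}$, $\bX_0 \in \G$; by Lemma \ref{lem:goodset}, $\G$ is absorbing, so $\bX_\ell \in \G$ almost surely for every $\ell$; and by Theorem \ref{thm:muproperty}, the stationary measure satisfies $\mu(\G^c)=0$. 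Consequently, both $f(\bX_\ell)$ and $\mu(f)$ lie within distance $O(\theta/(1-l_2))$ of $u$ in $\reals^n$, giving a uniform bound $\|Y_\ell\|\le C_1\theta/(1-l_2)$. Hence the diagonal contribution
$$\frac{1}{t^2}\sum_{\ell=1}^t \mathbb{E}\|Y_\ell\|^2 \le \frac{C_1^2\,\theta^2}{(1-l_2)^2\,t}.$$

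For the off-diagonal terms, the idea is to use the Markov property together with a coupling. For $\ell<k$, condition on $\mathcal{F}_\ell$ and write $\mathbb{E}\langle Y_\ell,Y_k\rangle = \mathbb{E}\bigl[\langle Y_\ell,\,\mathbb{E}[f(\bX_k)\mid\mathcal{F}_\ell]-\mu(f)\rangle\bigr]$. I would then estimate the inner conditional expectation by the same coupling argument used to prove Theorem \ref{thm:muproperty}: run a second chain $\tilde{\bX}_m$ that uses the same sparsifications $S^{(m)}$ for $m>\ell$ but starts from $\tilde{\bX}_\ell\sim\mu$. Iterating the contraction of Lemma \ref{lem:goodset} gives
$$\mathbb{E}\bigl[d(\bX_k,\tilde{\bX}_k)\mid \bX_\ell,\tilde{\bX}_\ell\bigr] \le \rho^{k-\ell}\,d(\bX_\ell,\tilde{\bX}_\ell),$$
and since both $\bX_\ell,\tilde{\bX}_\ell \in \G$ we have $d(\bX_\ell,\tilde{\bX}_\ell)\le 4\theta/(1-l_2)$. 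Combined with the fact that $f$ is Lipschitz (with constant $L$ of order $1$) as a map from $\G$ to $\reals^n$, this yields $\|\mathbb{E}[f(\bX_k)\mid\mathcal{F}_\ell]-\mu(f)\| \le C_2 L\,\rho^{k-\ell}\,\theta/(1-l_2)$. Together with $\|Y_\ell\|\le C_1\theta/(1-l_2)$, this gives $|\mathbb{E}\langle Y_\ell,Y_k\rangle| \le C_3\,\theta^2\rho^{k-\ell}/(1-l_2)^2$.

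The remaining step is to sum the geometric series:
$$\frac{2}{t^2}\sum_{\ell<k}|\mathbb{E}\langle Y_\ell,Y_k\rangle| \le \frac{2C_3\theta^2}{(1-l_2)^2 t^2}\sum_{\ell=1}^{t}\sum_{m\ge 1}\rho^m \le \frac{2C_3\theta^2}{(1-l_2)^2\,t\,(1-\rho)},$$
and using $1-\rho = (4/5)(1-l_2)$ from Lemma \ref{lem:goodset} converts this into the desired $t^{-1}$ bound. Adding the diagonal and off-diagonal contributions, and chasing explicit constants through the bound $\|f(\bx)-u\|\le\sqrt{2}\,d(\bx,\bu)$ and through the coupling, gives the final numerical constant $70$.

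The main technical obstacle is the coupling step: Theorem \ref{thm:muproperty} as stated only provides $|\mathbb{E}[f(\bX_t)]-\mu(f)|\le L\rho^t$, which would be too crude here because it loses the $\theta/(1-l_2)$ factor coming from the small diameter of $\G$. To get the correct $\theta^2$ scaling (rather than $\theta$) in the covariance bound, I need the sharper version that tracks the actual distance between the coupled chains' starting points — essentially running the proof of Theorem \ref{thm:muproperty} with the starting distance $d(\bX_\ell,\tilde{\bX}_\ell)$ kept as a multiplicative factor rather than upper-bounded by the diameter of $\P_n$. The other bookkeeping — the Lipschitz constant of $f$ on $\G$, the relation between $d(\bx,\by)$ and $\|f(\bx)-f(\by)\|$, and the constants from Lemma \ref{lem:goodset} — is routine once this sharper mixing statement is in hand.
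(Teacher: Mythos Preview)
Your proposal is correct and follows essentially the same route as the paper: expand into diagonal plus cross terms, bound the diagonal by the diameter of $\G$, and bound the cross terms by conditioning on $\bX_r$ and applying the contraction from Lemma~\ref{lem:goodset} via the coupling behind Theorem~\ref{thm:muproperty}, then sum the geometric series using $1-\rho=(4/5)(1-l_2)$. In particular, your observation that one must retain the factor $d(\bX_\ell,\tilde{\bX}_\ell)\le 4\theta/(1-l_2)$ from the coupling (rather than the cruder $d\le 1$ that appears in the \emph{statement} of Theorem~\ref{thm:muproperty}) is exactly the point: the paper's cross-term bound $\bigl|\E[y^*f(\bX_s)\mid\bX_r]-y^*\mu(f)\bigr|\le \rho^{s-r}\|y\|^2$ is precisely this sharper estimate, since $\|y\|$ and the diameter of $\G$ are both of order $\theta/(1-l_2)$.
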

Applying Markov's inequality and Eq.~(\ref{eq:SimpleIneq}), 
we get, with probability larger than $1- \delta/2$   
\begin{eqnarray*}
  	d(\bU_t,\mu(f)) \leq  \frac{12\theta}{(1-l_2)\sqrt{t\delta}}\;. 
\end{eqnarray*}

Next, we bound the term  $d(\bu,\mu(f))$ in 
Eq.~(\ref{eq:Decomposition})  with the following lemma. 
\begin{lemma}
	\label{lem:statdist}
	Under the hypothesis of Theorem \ref{thm:eigenvector},  
	\begin{align*}
	d(\bu,\mu(f)) \leq 8\Big(\frac{\theta\log(1/\theta)}{(1-l_2)}\Big)^2\;.
	\end{align*}
\end{lemma}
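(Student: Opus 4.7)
The intuition is that typical samples from $\mu$ lie in the good set $\G$, hence at distance $O(\theta/(1-l_2))$ from $\bu$, but the first-order deviations cancel on average because $\E[S^{(\ell)}] = M$, so the barycenter of $\mu$ sits at distance only $O(\theta^2)$ from $\bu$. The plan is first to decompose $f(\bX) = \alpha(\bX)\, u + w(\bX)$ with $w(\bX)\perp u$ and $\alpha(\bX)\ge 0$. Since $\mu(\G^c)=0$ by Theorem~\ref{thm:muproperty} together with Lemma~\ref{lem:goodset}, we have $\|w(\bX)\|\le 2\theta/(1-l_2)$ and $\alpha(\bX)\ge\sqrt{1-4\theta^2/(1-l_2)^2}$ for $\mu$-almost every $\bX$. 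Writing $\bar\alpha=\E_\mu\alpha$ and $\bar w=\E_\mu w$, a direct calculation yields
\begin{equation*}
d(\bu,\mu(f))^2 = \frac{\|\bar w\|^2}{\bar\alpha^2+\|\bar w\|^2},
\end{equation*}
so that it suffices to prove $\|\bar w\|=O\!\big(\theta^2\log^2(1/\theta)/(1-l_2)^2\big)$.

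The crucial step is to exploit stationarity: for $\bX\sim\mu$ independent of $S\sim p_S$, $\mu(f)=\E[f(S\bX)]$. Because $\G$ is $S$-invariant by Lemma~\ref{lem:goodset}, the sign in the definition of $f$ is preserved and $f(S\bX)=Sy/\|Sy\|$ with $y=f(\bX)$. I would write $S=M+E$ with $\E[E]=0$ and $\|E\|_2\le\theta\lambda$ almost surely, and Taylor expand $(My+Ey)/\|My+Ey\|$ in powers of $Ey$. The zeroth-order term is $My/\|My\|$; the first-order term is linear in the entries of $E$ and therefore vanishes in conditional expectation given $\bX$; and the remainder is bounded deterministically by a constant times $\|Ey\|^2/\|My\|^2$, which is $O(\theta^2)$ on $\G$. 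Hence, conditioning on $\bX$ and averaging over $S$,
\begin{equation*}
\E\big[f(S\bX)\,\big|\,\bX\big] \;=\; \frac{My}{\|My\|}+q(y),\qquad \|q(y)\|\le C\,\theta^2.
\end{equation*}

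Next I would extract the contraction of $M/\lambda$ on $u^\perp$. For $y=\alpha u + w$, $My=\alpha\lambda u + Mw$ with $Mw\perp u$ and $\|Mw\|\le l_2\lambda\|w\|$, and $\|My\|=\lambda\alpha\,(1+O(\theta^2/(1-l_2)^2))$. Consequently $(I-\cP_u)(My/\|My\|) = (M/\lambda)\,w + O\!\big(\theta^3/(1-l_2)^3\big)$. Taking expectation over $\bX\sim\mu$ in the one-step identity above and projecting onto $u^\perp$ gives the approximate fixed-point relation
\begin{equation*}
\bar w \;=\; \frac{M}{\lambda}\,\bar w + \widetilde R,\qquad \|\widetilde R\|=O(\theta^2).
\end{equation*}
Since $M/\lambda$ restricted to $u^\perp$ has spectral norm $l_2<1$, $(I-M/\lambda)$ is invertible on $u^\perp$ with inverse of norm at most $1/(1-l_2)$, so $\|\bar w\|\le\|\widetilde R\|/(1-l_2)=O(\theta^2/(1-l_2))$, which translates into the advertised bound via the formula of the first paragraph.

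The main obstacle is the careful bookkeeping of the remainder $\widetilde R$: one must control it uniformly on $\G$ while tracking the quadratic interactions between $w(\bX)$ (itself of order $\theta/(1-l_2)$) and $Ey$. A clean quantitative route is to iterate the approximate fixed-point relation $N=\Theta(\log(1/\theta)/(1-l_2))$ times, so that $\|(M/\lambda)^N\bar w\|\le l_2^N\cdot 2\theta/(1-l_2)=O(\theta^2/(1-l_2))$ and the geometric sum $\sum_{k=0}^{N-1}(M/\lambda)^k\widetilde R$ has norm at most $\|\widetilde R\|/(1-l_2)$; the two logarithmic factors in the stated bound should arise from coupling this mixing-time truncation with a sharper Taylor bound of the form $\|q(y)\|=O(\|w(y)\|\cdot\theta\log(1/\theta))$ that a more careful second-order expansion of the normalization yields on the good set.
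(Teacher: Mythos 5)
Your one-step stationary fixed-point argument is a genuinely different route from the paper's, which conditions on $\bX$, unrolls $k\asymp\log(1/\theta)/(1-l_2)$ iterations, uses the deterministic telescope $\big\|\prod_{\ell=1}^k S^{(\ell)}-M^k\big\|_2\le\lambda^k\big((1+\theta)^k-1\big)$, and only then takes expectation; the two $\log$ factors in the stated bound are an artifact of this choice of $k$, not of the $\E[S]=M$ cancellation you invoke. However, there is a gap in your bookkeeping. You correctly note $(I-\cP_u)(My/\|My\|)=(M/\lambda)w+O\big(\theta^3/(1-l_2)^3\big)$, but then assert $\|\widetilde R\|=O(\theta^2)$ in the fixed-point relation $\bar w=(M/\lambda)\bar w+\widetilde R$, silently dropping the cubic term. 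Since $(I-M/\lambda)^{-1}$ on $u^\perp$ has norm $1/(1-l_2)$, the dropped piece contributes $O\big(\theta^3/(1-l_2)^4\big)$ to $\|\bar w\|$, and this is \emph{not} bounded by $8\big(\theta\log(1/\theta)/(1-l_2)\big)^2$ throughout the hypothesis region: at $1-l_2\asymp\theta^{2/3}$ (the boundary allowed by $\theta\le(1/40)(1-l_2)^{3/2}$) the ratio of the two is $\asymp 1/\big(\theta^{1/3}\log^2(1/\theta)\big)\to\infty$ as $\theta\to 0$. Neither of your proposed repairs closes this: iterating the fixed point merely reproduces the geometric-sum bound $\|\widetilde R\|/(1-l_2)$, and the claimed sharper Taylor estimate $\|q(y)\|=O(\|w(y)\|\,\theta\log(1/\theta))$ is false --- the second-order piece $-\E[\langle\hat u,v\rangle v]$ with $\hat u=My/\|My\|$, $v=Ey/\|My\|$ equals $\E[vv^\ast]\hat u$, whose $u^\perp$-component is genuinely $O(\theta^2)$ independent of $\|w\|$.

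The fix within your framework is coordinate choice. If instead of $w_X=\cP_{u^\perp}(f(\bX))$ you track the affine-chart coordinate $z_X=\cP_{u^\perp}(f(\bX))/\langle u,f(\bX)\rangle$, the map $M$ acts \emph{exactly} as $z\mapsto (M/\lambda)z$, with no normalization correction, so the one-step expansion in $E=S-M$ has remainder genuinely $O(\theta^2)$ uniformly on $\G$; stationarity then gives $\|\E_\mu z_X\|\le C\theta^2/(1-l_2)$. Converting back via $w_X=\alpha_X z_X$ with $\alpha_X=(1+\|z_X\|^2)^{-1/2}$ introduces only an additive $O(\nu^3)$ error --- crucially, \emph{not} divided by $1-l_2$ --- and the resulting bound $O\big(\theta^2/(1-l_2)+\theta^3/(1-l_2)^3\big)$ is dominated by $8\big(\theta\log(1/\theta)/(1-l_2)\big)^2$ under the hypothesis, in fact with room to spare (no logs). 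As written, though, the proof has a real hole, and the invocations of "iterating" and a "sharper Taylor bound" do not repair it.
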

By noting that $\|u-\hu^{(t)}\|\leq \sqrt{2}\, d(\bu,\bhU_t)$, 
this finishes the proof of the theorem.

%
%
\section{Proof of technical Lemmas}
\label{sec:technical}

%
%
\subsection{Proof of Remark \ref{rem:drift}} 
\label{app:drift}

Assuming initial vector $X\in\reals^n$ with i.i.d. Gaussian entries, 
we can get close to $u$ by iteratively applying a single sparsification $S$. 
Define a good set of initial vectors 
\begin{eqnarray*}
 	\cF_n=\left\{x\in\reals^{n}: |u^\ast x|\geq\frac{1}{n^{5/2}}\text{ and }\max_{i\in[n]} |u_i^\ast x| \leq \sqrt{\frac{6\log n}{n}} \right\}\;.
\end{eqnarray*} 
Since, $u_i^\ast X$'s are independent and distributed as $\cN(0,1/n)$, 
it follows that we have $\prob(|u_i^\ast X|\geq\sqrt{(6\log n)/n})\leq2/n^3$ 
and $\prob(|u^\ast X|)\leq1/n^2$. 
Applying union bound, we get $\prob(X\in\cF_n)\geq1-3/n^2$. 
Assuming we start from this good set, 
we show that for $k$ large enough, 
we are guaranteed to have $\|u-\bx^{(k)}\|\leq\theta/(1-l_2)$.

Let $\{\tlam_i\}$ be the eigenvalues of $S$ 
such that $\tlam_1\geq|\tlam_2|\geq\cdots\geq|\tlam_n|$, and 
let $\{\tu_i\}$ be the corresponding eigenvectors. 
We know that $\tlam_1>0$ since $\tlam_1\geq\lambda-\|S-M\|_2$  
and $\|S-M\|_2<\lambda$ by assumption. 
Then, by the triangular inequality, 
\begin{eqnarray*}
 	\|u-\bx^{(k)}\| \leq \|u-\tu\| + \|\tu-\bx^{(k)}\| \;.
\end{eqnarray*}

To bound the first term, note that 
\begin{eqnarray*}
\|M-S\|_2 &\geq& |u^t(M-S)u|  \\
	      &\geq& \lambda-\tlam_1(u^\ast\tu)^2-\tlam_2\|\cP_{\tu^{\perp}}(u)\|^2 \;.
\end{eqnarray*}
This implies that $(u^\ast\tu)^2\geq(\lambda-\tlam_2-\|M-S\|_2)/(\tlam_1-\tlam_2)$. 
We can further apply Weyl's inequality \cite{HJ90}, to get $|\tlam_i-\lambda_i |\leq\|M-S\|_2$. 
It follows that $(u^\ast\tu)^2\geq(\lambda-\lambda_2-2\|M-S\|_2)/(\lambda-\lambda_2+2\|M-S\|_2)$.
Note that this bound is non-trivial only if $\|M-S\|_2\leq(\lambda-\lambda_2)/2$. 
Using the fact that $(1-a)/(1+a)\leq(1-a)^2$ for any $|a|<1$, this implies that 
\begin{eqnarray*}
	\|u-\tu\| \leq \sqrt{\frac{4\|M-S\|_2}{\lambda-\lambda_2}}\;.
\end{eqnarray*}
In particular, for $\|M-S\|_2 \leq \theta^2\|M\|_2/(2(1-l_2))$ as per our assumption, 
this is less than $\sqrt{2}\theta/(1-l_2)$.

To bound the second term, we use $x^{(0)}\in\cF_n$ to get   
\begin{eqnarray*}
 	\frac{(\tu^\ast S^kx^{(0)})^2}{\|S^kx^{(0)}\|^2} 
	&\geq& \frac{1}{1+\sum_{i\geq2}\frac{\tlam_i^{2k}}{\tlam_1^{2k}}\frac{(\tu_i^\ast x^{(0)})^2}{(\tu_1^\ast x^{(0)})^2}}\\
	&\geq& 1- (\tlam_2/\tlam_1)^{2k} 6n^5\log n\\
	&\geq& 1-\frac{\theta^2}{4(1-l_2)^2} \;.
\end{eqnarray*}
In the last inequality we used $k\geq 3\log(n/\theta)/(1-l_2-\theta)$,  
and the fact that $(\tlam_2/\tlam_1)\leq l_2+\theta$. 
Then, $\|\tu-\bx^{(k)}\| \leq {\theta}/({\sqrt{2}(1-l_2)})$. 
Collecting both terms, this proves the desired claim. 
\endproof

%
%
\subsection{Proof of Lemma \ref{lem:goodset}}
\label{app:goodset}

\begin{lemma}[Contraction]
	\label{lem:contract}
	For a given $\nu\leq(1/20)$, 
	assume that $x,x'$ satisfy $\|x\|=\|x'\|=1$, $\<u,x\>\geq0$, $\<u,x'\>\geq0$,  
	$\|\cP_{u_\perp}(x)\|\leq \nu$, and $\|\cP_{u_\perp}(x')\|\leq \nu$. 
	Then, under the hypothesis of Lemma \ref{lem:goodset}, we have 
	\begin{align}
	\label{eq:contractperpu}
	\left\|\cP_{u_\perp}\left(\frac{Qx}{\|Qx\|}-\frac{Qx'}{\|Qx'\|}\right)\right\| &\leq 
	\big(l_2(1+3\nu^2) + 3\theta\big) \|z-z' \| \;, 
	\end{align}
	and 
	\begin{align}
	\label{eq:contractalongu}
	\left\|\cP_{u}\left(\frac{Qx}{\|Qx\|}-\frac{Qx'}{\|Qx'\|}\right)\right\| &\leq 
	\big(4\nu + 4\theta\big) \|z-z'\|\;,
	\end{align}
	where $l_2\equiv |\lambda_2|/\lambda$, $z=\cP_{u_\perp}(x)$ and $z'=\cP_{u_\perp}(x')$.
\end{lemma}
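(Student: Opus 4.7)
The plan is to prove both bounds by decomposing $x, x'$ into their components along $u$ and in $u_\perp$, pushing them through $Q$ and the nonlinear normalization $v \mapsto v/\|v\|$, and carefully tracking how each piece transforms. Write $x = \alpha u + z$ and $x' = \alpha' u + z'$ with $\alpha = \sqrt{1-\|z\|^2} \ge \sqrt{1-\nu^2}$ (and likewise $\alpha'$). From $|\alpha - \alpha'| = |\|z\|^2 - \|z'\|^2|/(\alpha + \alpha') \le (\nu/\sqrt{1-\nu^2})\|z-z'\|$ one obtains $\|x-x'\| \le \|z-z'\|/\sqrt{1-\nu^2}$, so the target right-hand sides (in $\|z-z'\|$) are equivalent up to $O(\nu^2)$ to bounds in $\|x-x'\|$. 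Split $Q = M + \Delta$ with $\|\Delta\|_2 \le \theta\lambda$; since $Mu = \lambda u$ and $Mz \in u_\perp$ whenever $z \in u_\perp$, I would write $Qx = A u + W$ with $A = \lambda\alpha + u^\ast\Delta x$ and $W = Mz + \cP_{u_\perp}(\Delta x) \in u_\perp$, record $A \ge \lambda(\sqrt{1-\nu^2} - \theta) > 0$ and $\|W\| \le (l_2\nu + \theta)\lambda$, and analogously define $A', W'$. The ratio $\|W\|/A$ is small (of order $l_2\nu + \theta$), and will drive all the normalization-based corrections.

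For (\ref{eq:contractperpu}) the starting point is the algebraic identity
\[ \cP_{u_\perp}\!\left(\frac{Qx}{\|Qx\|}-\frac{Qx'}{\|Qx'\|}\right)= \frac{W-W'}{\|Qx\|} + W'\!\left(\frac{1}{\|Qx\|}-\frac{1}{\|Qx'\|}\right). \]
The first term is controlled by $\|W - W'\| \le |\lambda_2|\|z-z'\| + \theta\lambda\|x-x'\|$ and $\|Qx\| \ge A$, yielding the leading factor $l_2/\sqrt{1-\nu^2} \le l_2(1 + O(\nu^2))$ plus an $O(\theta)$ piece. For the second term I would expand $1/\|Qx\| - 1/\|Qx'\| = (\|Qx'\|^2 - \|Qx\|^2)/[\|Qx\|\|Qx'\|(\|Qx\|+\|Qx'\|)]$ and use the factorization $\|Qx'\|^2 - \|Qx\|^2 = (A'-A)(A'+A) + (\|W'\|-\|W\|)(\|W'\|+\|W\|)$, together with $|A' - A| \le \lambda(\nu/\sqrt{1-\nu^2})\|z-z'\| + \theta\lambda\|x-x'\|$ and $|\|W\|-\|W'\|| \le \|W-W'\|$; multiplied by $\|W'\| \le (l_2\nu + \theta)\lambda$, this contributes an $O(\nu(l_2\nu+\theta))\|z-z'\|$ correction, comfortably absorbed into $3 l_2\nu^2 + 3\theta$.

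For (\ref{eq:contractalongu}), note $\cP_u(Qx/\|Qx\|) = (1 + (\|W\|/A)^2)^{-1/2}\,u$. Using $|(1+s)^{-1/2} - (1+t)^{-1/2}| \le \tfrac{1}{2}|s-t|$ for $s,t \ge 0$ and $|(\|W\|/A)^2 - (\|W'\|/A')^2| \le 2\max(\|W\|/A, \|W'\|/A')\cdot|\|W\|/A - \|W'\|/A'|$, the bound reduces to $O(\nu)$ times $|\|W\|/A - \|W'\|/A'|$. Writing this last difference as $(\|W\|(A'-A) + A(\|W\|-\|W'\|))/(A A')$ and plugging in the same bounds on $|A-A'|$ and $\|W-W'\|$ gives $|\|W\|/A - \|W'\|/A'| \le O(l_2 + \theta)\|z-z'\|$, so the $u$-component difference is of order $\nu(l_2+\theta)\|z-z'\|$, which fits within $(4\nu + 4\theta)\|z-z'\|$. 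The main obstacle is not structural — the $l_2$-contraction of $M$ on $u_\perp$ is manifest and the normalization-induced distortion is clearly lower-order — but making every implicit constant explicit enough to land on exactly $(l_2(1+3\nu^2) + 3\theta)$ and $(4\nu + 4\theta)$. Disciplined bookkeeping of the second-order-in-$\nu$ and first-order-in-$\theta$ remainders, exploiting $\nu \le 1/20$ and the implicit $\theta \le 1/40$ inherited from Theorem \ref{thm:eigenvector}, is the one place the argument requires real care.
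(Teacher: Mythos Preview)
Your proposal is correct and follows essentially the same approach as the paper: decompose $x=\alpha u+z$, split the normalized difference as $(W-W')/\|Qx\|+W'(1/\|Qx\|-1/\|Qx'\|)$ (the paper's $T_1+T_2$), and control the normalization term via $1/a-1/b=(b^2-a^2)/(ab(a+b))$. The only minor divergence is for the $u$-component, where the paper uses the analogous $T_3,T_4$ split while you parametrize via $(1+(\|W\|/A)^2)^{-1/2}$; these are equivalent, and in fact your treatment is more explicit than the paper's one-line ``analogously''.
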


\begin{proof}

By the assumption that $\<u,x\>\geq0$ and $\<u,x'\>\geq0$, 
we have $x=\sqrt{1-\|z\|^2}u+z$ and $x'=\sqrt{1-\|z'\|^2}u+z'$.
The following inequalities, which follow from $\|Q-M\|_2\leq\theta\lambda$, 
will be frequently used.  
\begin{eqnarray*}
	(1-\theta)\lambda\;\leq&\|Qu\|&\leq\;(1+\theta)\lambda\;,\\
	(l_2-\theta)\lambda\;\leq&\|Qz\|&\leq\;(l_2+\theta)\lambda\;.
\end{eqnarray*}
The following inequalities will also be useful in the proof.
\begin{align}
	\label{eq:uppernorm}
 	\|x-x'\| \leq (1/\sqrt{1-\nu^2})\|z-z'\|\;,
\end{align}
where we used 
$\sqrt{1-a^2}-\sqrt{1-b^2}\leq (\nu/\sqrt{1-\nu^2}) |a-b|$ for 
$|a|\leq\nu$ and $|b|\leq\nu$. Similarly, using the fact that 
$Mu$ and $Mz$ are orthogonal
\begin{align}
	\label{eq:lowernorm}
 	\|Qx\|&\geq (\sqrt{1-\|z\|^2})\|Mu\|-\|Q-M\|_2 \nonumber\\
	&\geq \lambda(\sqrt{1-\nu^2}-\theta)\;,
\end{align}
Next, we want to show that 
\begin{align}
	\label{eq:uppernorm2}
	\left|\frac{1}{\|Qx\|}-\frac{1}{\|Qx'\|}\right| \leq 
	\frac{(2.2\nu+0.1\theta)}{(\sqrt{1-\nu^2}-\theta)^3}\|z-z'\|\;.
\end{align}
We use the equality $1/a-1/b=(a^2-b^2)/(ab(a+b))$ with
$a={\|Qx\|}$ and $b=\|Qx'\|$.
The denominator can be bounded using \eqref{eq:lowernorm}. 
It is enough to bound $|\|Qx'\|^2-\|Qx\|^2|$ using 
\begin{align*}
&\left| \|Q(\sqrt{1-\|z\|^2} u + z)\|^2 - \|Q(\sqrt{1-\|z'\|^2} u + z')\|^2 \right| \\
&\leq \Big|\|z\|^2-\|z'\|^2\Big|\|Qu\|^2  + \Big|\|Qz\|^2-\|Qz'\|^2\Big|  \\
&\;\;\;+ 2\Big|(\sqrt{1-\|z\|^2}z-\sqrt{1-\|z'\|^2}z')^\ast Q^\ast Qu\Big|\;.
\end{align*}
Note that $\Big|\|z\|^2-\|z'\|^2\Big|\|Qu\|^2 \leq 2\nu(1+\theta)^2\lambda^2\|z-z'\|$, 
and $\big|\|Qz\|^2-\|Qz'\|^2\big| \leq 2\nu(l_2+\theta)^2\lambda^2\|z-z'\|$. 
The last term can be decomposed into
\begin{align*}
&2\Big|(\sqrt{1-\|z\|^2}z-\sqrt{1-\|z'\|^2}z')^\ast Q^\ast Qu\Big|\\
&\leq 2|\sqrt{1-\|z\|^2}-\sqrt{1-\|z'\|^2}|\,|z^\ast Q^\ast Qu|\\
& + 2\sqrt{1-\|z'\|^2}\,|(z-z')^\ast Q^\ast Qu|\;.
\end{align*}
Note that $|\sqrt{1-\|z\|^2}-\sqrt{1-\|z'\|^2}|\leq(\nu/\sqrt{1-\nu^2})\|z-z'\|$,  
$|z^\ast Q^\ast Qu|\leq\lambda^2\theta(l_2+\theta)$, 
and $|(z-z')^\ast Q^\ast Qu|\leq \lambda^2(l_2+\theta)\theta\|z-z'\|$.
Collecting all the terms and assuming $\theta\leq1/40$ and $\nu\leq1/20$, 
$|\|Qx\|-\|Qx'\|| \leq (4.4\nu+0.1\theta)\lambda^2\|z-z'\|$.
this implies \eqref{eq:uppernorm2}. 

To prove \eqref{eq:contractperpu}, define $T_1\equiv \cP_{u^\perp}(Qx-Qx')/\|Qx\|$ and 
$T_2\equiv\cP_{u^\perp}(Qx')\big((1/\|Qx\|)-(1/\|Qx'\|)\big)$. 
We bound each of these separately. 
\begin{align*}
\|T_1\| &= \frac{\left\|\cP_{u^\perp}\big(M(x-x') + (Q-M)(x-x')\big)\right\|}{\|Qx\|}\\
	&\stackrel{(a)}{\leq} \frac{l_2\|z-z'\| + \theta\|x-x'\|}{(\sqrt{1-\nu^2}-\theta)}\\
 	&\stackrel{(b)}{\leq} \frac{l_2+(\theta/\sqrt{1-\nu^2})}{(\sqrt{1-\nu^2}-\theta)}\|z - z'\|,
\end{align*}
where $(a)$ follows from \eqref{eq:lowernorm} and the fact that $\cP_{u^\perp}Mu=0$, 
and $(b)$ follows from \eqref{eq:uppernorm}.
Similarly, using \eqref{eq:uppernorm2} 
\begin{align*}
\|T_2\|&=\Big\|\cP_{u^\perp}\big(Q(\sqrt{1-\|z'\|^2}u+z')\big)\Big\| 
	\Big|\frac{1}{\|Qx\|}-\frac{1}{\|Qx'\|} \Big|\\
&\leq(\theta +\nu l_2)\frac{2.2\nu+0.1\theta}{(\sqrt{1-\nu^2}-\theta)^3}\|z-z'\|\;.
\end{align*}
Notice that by assumption, we have $\theta\leq(1/40)$, 
and by the definition of $\G$ in \eqref{eq:defG}, we have $\nu\leq(1/20)$.
Then, after some calculations, we have proved \eqref{eq:contractperpu}.
Analogously we can prove \eqref{eq:contractalongu} by bounding 
$T_3\equiv \cP_u(Qx-Qx')/\|Qx\|$ and 
$T_4\equiv \cP_u(Qx')\big((1/\|Qx\|)-(1/\|Qx'\|)\big)$ separately. 
\end{proof}
%
%

We are now in position to prove Lemma \ref{lem:goodset}.

{\bf Proof of Lemma \ref{lem:goodset}}. 
We first show that for any $\bx\in\G$ with a representative $x$ such that $\<x,u\>\geq0$, 
we have $Q\bx\in\G$. 
Note that, by triangular inequality, 
$\|\cP_{u^\perp}(Qu)\|\leq\theta\lambda $ and $\|Qu\|\geq(1-\theta)\lambda$.
Applying Lemma~\ref{lem:contract} to $x$ and $u$, we get 
\begin{align}
	&\left\|\cP_{u_\perp}\left(\frac{Qx}{\|Qx\|}\right)\right\| \nonumber\\
	&\leq 
	 \left\|\cP_{u_\perp}\left(\frac{Qu}{\|Qu\|}\right)\right\| + \Big(l_2\Big(1+3\nu^2\Big)+3\theta\Big)
	 \left\|\cP_{u_\perp}(x-u)\right\|\nonumber\\
	&\leq \Big(\frac{1}{1-\theta}+3\nu\Big)\theta + \Big(l_2\Big(1+3\nu^2\Big)\Big)\nu \;.\label{eq:StayInG}
\end{align}
For $\theta\leq(1/40)$ and for $\theta$ and $\nu$ satisfying, 
\begin{align*}
	\frac{2}{1-l_2}\theta\leq \nu \leq \min\Big\{\sqrt{\frac{2(1-l_2)}{15}}\;,\;\frac{1}{20}\Big\}\;,
\end{align*}
the right-hand side of \eqref{eq:StayInG} is always smaller than $\nu$, 
since $ \big((1/(1-\theta))+3\nu\big)\theta \leq (3/5)(1-l_2)\nu$ and 
$3\nu^2 \leq (2/5)(1-l_2) $. 
This proves our claim for $\theta\leq(1/40)(1-l_2)^{3/2}$ 
and $\nu \in [(2\theta)/(1-l_2),\sqrt{1-l_2}/20]$ as per our assumptions.

Next, we show that there is a contraction in the set $\G$.
For $x$ and $x'$ satisfying the assumptions in Lemma \ref{lem:contract}, 
define $y\equiv{Ax}/{\|Ax\|}$, $y'\equiv{Ax'}/{\|Ax'\|}$, 
$z\equiv\cP_{u^\perp}(x)$, and $z'\equiv\cP_{u^\perp}(x')$. 
For $\|x\|\leq\nu$ and $\|x'\|\leq\nu$ we have
\begin{align*}
	1-2\nu^2\leq (x,x')\leq 1.
\end{align*}
Using the above bounds we get 
\begin{align*}
\frac{1-\<y,y'\>^2}{1-\<x,x'\>^2} \leq \frac{1}{(1-\nu^2)}\frac{1-\<y,y'\>}{1-\<x,x'\>} \;.
\end{align*}
We can further bound $(1-\<y,y'\>)/(1-\<x,x'\>)$ using Lemma \ref{lem:contract}. 
\begin{align*}
\|y - y'\| \leq \sqrt{\left(l_2+3\nu^2+3\theta\right)^2+\left(4\nu+4\theta\right)^2}\,\|z-z'\|  \;.
\end{align*}
Using $\|z-z'\|^2\leq\|x-x'\|^2=2-2\<x,x'\>$, we get
\begin{align*}
\frac{1-\<y,y'\>}{1-\<x,x'\>} \leq  \left(l_2+3\nu^2+3\theta\right)^2+\left(4\nu+4\theta\right)^2 \;.
\end{align*}
For $\theta\leq(1/40)(1-l_2)^{3/2}$ and $\nu \leq \sqrt{1-l_2}/20$ as per our assumptions, 
it follows, after some algebra, that 
\begin{align*}
	\sqrt{\frac{1-\<y,y'\>^2}{1-\<x,x'\>^2}} \leq \sqrt{\frac{\left(l_2+3\nu^2+3\theta\right)^2+\left(4\nu+4\theta\right)^2}{1-\nu^2}} \leq \rho \;,
\end{align*}
for $\rho \geq 1-0.8(1-l_2)$. 


\endproof

%
%
\subsection{Proof of Lemma \ref{lem:NormConvergence}}
\label{app:NormConvergence}

Expanding the summation, we get  
\begin{align*}
&\Big\|\frac{1}{t}\sum_{s=1}^{t} f(\bX_s) - \stat(f)\Big\|^2 \\
&\;\;\;= \frac{1}{t^2}\sum_{s=1}^{t}\|f(\bX_s)-\stat(f)\|^2 + \\
&\;\;\;\;\;\;\frac{2}{t^2}\sum_{r=1}^{t}\sum_{r<s}\big\<f(\bX_r)-\stat(f),f(\bX_s)-\stat(f)\big\> \;,
\end{align*}
where $\<a,b\>=a^\ast b$ denotes the scalar product of two vectors.
We can bound the first term by $20\theta^2/(t(1-l_2)^2)$, 
since 
\begin{eqnarray}
	\label{eq:mubound}
 	\|f(\bX_s)-\mu(f)\|^2\leq \frac{20\theta^2}{(1-l_2)^2} \;,
\end{eqnarray}
where we used $\|\cP_u(f(\bX_s)-\mu(f))\|^2\leq 4\theta^2/(1-l_2)^2$ 
and $\|\cP_{u^\perp}(f(\bX_s)-\mu(f))\|^2
\leq 16\theta^2/(1-l_2)^2$ for $\bX_s\in\G$. 

To bound the second term, let $y\equiv f(\bX_r)-\stat(f)$. 
Note that by \eqref{eq:mubound}, $\|y\|\leq\sqrt{20}\theta/(1-l_2)$. 
We apply Theorem \ref{thm:GeneralConv} together with Lemma \ref{lem:goodset} to get 
\begin{eqnarray*}
\Big|\bE\big[y^\ast f(\bX_s)\,\big|\,\bX_r\big]-y^\ast\mu(f) \Big| \leq\rho^{s-r}\|y\|^2 \;, 
\end{eqnarray*}
for $r < s$. 
Using the fact that for $|\rho|\leq1$, $\sum_{r=1}^{t}\sum_{r< s} \rho^{s-r} \leq 
\sum_{r=1}^{t} \rho^{-r}\rho^{r+1}/(1-\rho)\leq \rho t/(1-\rho)$, it follows that 
\begin{align*}
&\sum_{r=1}^{t}\sum_{r<s} \bE\Big[\big\<f(\bX_r) -\stat(f), f(\bX_s)-\stat(f)\big\>\Big] \\
&\;\;\;\;\leq \sum_{r=1}^{t}\sum_{r< s} \bE\Big[\big\<f(\bX_r) -\stat(f), \bE\big[f(\bX_s)-\stat(f)\,\big|\,\bX_r\big]\big\>\Big] \\ 
&\;\;\;\;\leq \frac{20\theta^2}{(1-l_2)}\sum_{r=1}^{t}\sum_{r< s} \rho^{s-r} \leq \frac{20\theta^2 \rho t}{(1-l_2)(1-\rho)} \;.
\end{align*}

Combining the above bounds we get
\begin{align*}
\bE\Big\|\frac1t\sum_{s=1}^t f(\bX_s) - \stat(f)\Big\|^2 \leq \frac{20\theta^2}{t(1-l_2)^2}+\frac{40\theta^2\rho}{(1-l_2)(1-\rho) t}.
\end{align*}
For $\rho=1-(4/5)(1-l_2)$ as in Lemma \ref{lem:goodset}, this proves the desired claim.
\endproof

%
%
\subsection{Proof of Lemma \ref{lem:statdist}}
\label{app:statdist}

From Theorem \ref{thm:muproperty}, we know $\mu(\G^c)=0$. 
This imples that $\|\mu(f)\|^2 \geq \|\cP_u(\mu(f))\|^2 \geq 1-1/400$. 
Then,
\begin{align*}
	d(\bu,\mu(f)) = \frac{\|\cP_{u^\perp}(\stat(f))\|}{\|\mu(f)\|} \leq 2\|\cP_{u^\perp}(\stat(f))\|\;.
\end{align*}

Let $\bX$ be an random element in $\P_n$ following 
the stationary distribution $\stat(\cdot)$, and 
the random vector $X\in\reals^{n}$ be the representative. 
From the definition of $f(\cdot)$, 
$\cP_{u^\perp}(X)$ is invariant when we apply $f(\cdot)$, 
whence $\cP_{u_\perp}(f(\bX))=\cP_{u_\perp}(X)$. 
We can bound $\|\cP_{u_\perp}(X)\|$ with the following recursion.
\begin{align*}
&\cP_{u_\perp}(X) = 
\bE\left[\frac{\cP_{u_\perp}(\,\prod_{\ell=1}^{k}S^{(\ell)}X)}{\|\prod_{\ell=1}^{k}S^{(\ell)}X\|}\,\Big|\,X\right]\\
& = \bE\left[ \cP_{u_\perp}\Big(\prod_{\ell=1}^{k}S^{(\ell)}X\Big)
   \Big(\frac{1}{\|\prod_{\ell=1}^{k} S^{(\ell)} X\|}- \frac{1}{\|M^kX\|}\Big) \,\Big|\,X\right]  \\
&\;\;\;\;+ \frac{\bE\left[\cP_{u_\perp}\Big(\,\prod_{\ell=1}^{k} S^{(\ell)}X\Big)\,\big|\,X\right]}{\|M^k X\|}\\
& = \bE\Big[ \cP_{u_\perp}\Big(\prod_{\ell=1}^{k}S^{(\ell)}X\Big)\Big(\frac{1}{\|\prod_{\ell=1}^{k}S^{(\ell)}X\|}- \frac{1}{\|M^kX\|}\Big) \,\Big|\,X\Big] \\
&\;\;\;\;+ \frac{\cP_{u_\perp}(M^k X)}{\|M^k X\|}\;.
\end{align*}

Let $\nu\equiv2\theta/(1-l_2)$. 
To bound the second term, note that $\mu(\G^c)=0$.
This imples that $\|X\| \geq \|\cP_u(X)\|\geq \sqrt{1-\nu^2}$ 
and $\cP_{u^\perp}(X)\leq \nu$ with probability one. Then, 
\begin{align*}
\frac{\left\|\cP_{u_\perp}(M^k X)\right\|}{\|M^k X\|} \leq l_2^k \frac{\nu}{\sqrt{1-\nu^2}}, 
\end{align*}
with probability one.
To bound the first term, we use the telescoping sum 
\begin{align*}
 \prod_{\ell=1}^{k}S^{(\ell)}-M^k=\sum_{i=1}^k \Big(\prod_{\ell=i+1}^{k}S^{(\ell)}\Big)(S^{(i)}-M)M^{i-1}\;.  
\end{align*}
Applying the triangular inequality of the operator norm,  
we have 
\begin{align*}
	\Big\|\prod_{\ell=1}^{k}S^{(\ell)}-M^k\Big\|_2 \leq \lambda^k\big((1+\theta)^{k} - 1\big) \;,
\end{align*}
which follows from $ \Big(\prod_{\ell=i+1}^{k}S^{(\ell)}\Big)(M^{(i)}-M)M^{i-1} \leq \lambda^k (1+\theta)^{k-i} \theta $. Using the above inequality, we get the following bounds with probability one.  
\begin{align*}
\Big\|\cP_{u_\perp}\Big(\prod_{\ell=1}^kS^{(\ell)} X\Big)\Big\| 
&\leq \big\|\cP_{u_\perp}\big(M^k X\big)\big\| + \Big\|\big(\prod_{\ell=1}^kS^{(\ell)}-M^k\big)X\Big\|\\
&\leq \lambda^k(l_2^k \nu + (1+\theta)^k -1) \;, \text{ and }\\
\Big\|\cP_u\Big(\prod_{\ell=1}^kS^{(\ell)}X\Big)\Big\| 
&\geq \big\|\cP_u\big(M^k X\big)\big\| - \Big\|\big(\prod_{\ell=1}^kS^{(\ell)}-M^k\big)X\Big\| \\
&\geq \lambda^k(\sqrt{1-\nu^2} - (1+\theta)^k + 1) \;.
\end{align*}
Then it follows that, 
\begin{align*}
&\left|\frac{1}{\|\prod_{\ell=1}^{k}S^{(\ell)}X\|}- \frac{1}{\|M^kX\|}\right| 
\leq \frac{\|(M^k-\prod_{\ell=1}^kS^{(\ell)}) X\|}{\|M^kX\|\|\prod_{\ell=1}^{k}S^{(\ell)}X\|}\\
&\;\;\;\; \leq \frac{((1+\theta)^k-1)}{\lambda^k\sqrt{1-\nu^2}(\sqrt{1-\nu^2} - (1+\theta)^k +1)}.
\end{align*}
Collecting all the terms, we get 
\begin{align*}
d(\bu,\mu(f)) \leq 
\frac{2((1+\theta)^k -1 + l_2^k \nu)((1+\theta)^k-1)} 
{\sqrt{1-\nu^2}(\sqrt{1-\nu^2} - (1+\theta)^k +1)} 
+  \frac{2l_2^k\nu}{\sqrt{1-\nu^2}} \;.
\end{align*}

Let $k=\lceil \log(\theta)/\log(l_2)\rceil$ such that $l_2^k\leq\theta$. 
From the assumption that $\theta\leq(1-l_2)^{3/2}/40$, it follows that 
$\theta\log \theta \leq 0.12(1-l_2)$. 
Then, 
\begin{align*}
 (1+\theta)^{(\log\theta/\log l_2)}-1 &\leq e^{(\theta\log\theta/\log l_2)}-1 \\
&\leq \frac{1.1}{(1-l_2)}\theta\log(1/\theta)\;,
\end{align*}
Then, after some algebra, $(1+\theta)^k-1 \leq \theta+(1+\theta)((1+\theta)^{(\log\theta/\log l_2)}-1)\leq 1.5\theta\log(1/\theta)/(1-l_2)$, and 
$(1+\theta)^k-1 +l_2^k\nu \leq1.5\theta\log(1/\theta)/(1-l_2)$. 
It also follows that $\sqrt{1-\nu^2} \geq \sqrt{399/400}$ 
and $(\sqrt{1-\nu^2} - (1+\theta)^k +1)\geq 0.8$. 
Collecting all the terms, we get the desired bound on $d(\bu,\mu(f))$.

\endproof

%
%

\section{Eigenvalue Estimation}
\label{sec:value}

In the previous sections, we discussed 
the challenging task of computing the largest eigenvector 
under the gossip setting.
A closely related task of computing the largest eigenvalue is also 
practically important in many computational problems.
For example, positioning from pairwise distances 
requires the leading eigenvalues, as well as the leading eigenvectors, to correctly find the positions \cite{OKM10}. 
In the following, we present an algorithm to estimate the leading eigenvalue under the gossip setting 
and provide a performance guarantee.  
Although the proposed algorithm uses the same trajectory $\{x^{(t)}\}$ from \gossip_pca, 
the analysis is completely different from that of the eigenvector estimator. 

We assume to have at our disposal 
the random trajectory $\{x^{(t)}\}_{t\geq0}$, defined as in \eqref{eq:Basic}, 
possibly from running \gossip_pca. 
Assume that we start with $x^{(0)}$ with entries distributed as $\cN(0,1)$. 
Our estimate for the top eigenvalue $\lambda$ after $t$ iterations is  
\begin{eqnarray*}
 \hlam^{(t)} = \Big\{ \big| \<x^{(0)},x^{(t)}\> \big| \Big\}^{1/t} \;. 
\end{eqnarray*}
Although, this estimator uses  
the same trajectory $\{x^{(t)}\}_{t\geq0}$ as \gossip_pca, 
the analysis significantly differs from that of the eigenvector estimator. 
Hence, the statement of the error bound in 
Theorem \ref{thm:eigenvalue} is 
also significantly different from Theorem \ref{thm:eigenvector}.
The main idea of our analysis is to bound the second moment of the estimate 
and apply Chebyshev's inequality. 
Therefore, the second moment of $S^{(\ell)}_{ij}$ characterized by $\alpha$ 
determines the accuracy of the sparsification. 
\begin{eqnarray}
	\max_{i,j}\var(S^{(\ell)}_{ij}) \leq (\alpha/n) \|M\|_2^2 \;. \label{eq:AlphaDef}
\end{eqnarray}
The trade-off between $d$, which determines the complexity, 
and $\alpha$ depends on the specific sparsification method. 
With random sampling described in Section \ref{sec:example}, 
it is not difficult to show that Eq.~\eqref{eq:AlphaDef} holds with only $\alpha=O(1/d)$. 

Our main result bounds the error of the algorithm in terms of $\alpha$, $t$, and 
$\srank \equiv \sum_{i=1}^n({|\lambda_i|}/{\lambda})$.
The proof of this theorem is outline in the following section. 
\begin{thm}
\label{thm:eigenvalue}
	Let $\{S^{(\ell)}\}_{\ell\geq 1}$ be a sequence of i.i.d. $n\times n$ random matrices satisfying 
	$E[S^{(\ell)}]=M$ and Eq.~\eqref{eq:AlphaDef}.
	Assume $\alpha<1/2$ and $\max\{\log_2 n,2\log_{(1/l_2)} n\}\le t \le n/(4\alpha\srank)$.
	Then with probability larger than $1-\max\{\delta,16/n^2\}$ 
	\begin{align*}
	\left|\frac{\hlam^{(t)}-\lambda}{\lambda}\right|
	\le 
	\max\left\{ \frac{8\sqrt{2}}{tn\sqrt{\delta}} ; \;
	32\sqrt{\frac{\alpha\srank^{3/2}\log n}{t^2\delta}};\;
	48\sqrt{\frac{\alpha\srank^{3}(\log n)^2}{t n\delta}}
	\right\}\, ,
	\end{align*}
	provided the right hand side is smaller than $1/t$.
\end{thm}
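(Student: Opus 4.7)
Let $Z_t \equiv \<x^{(0)}, x^{(t)}\>$, so that $\hlam^{(t)} = |Z_t|^{1/t}$. Following the strategy hinted at in the paragraph preceding the theorem, my plan is to compute the first two moments of $Z_t$, apply Chebyshev's inequality to control $|Z_t - \E[Z_t]|$, and then translate a multiplicative bound on $Z_t$ around $\lambda^t$ into a relative-error bound on $\hlam^{(t)}$ around $\lambda$ by expanding the map $z\mapsto z^{1/t}$.

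For the first moment, independence of $\{S^{(\ell)}\}$ together with $\E[S^{(\ell)}]=M$ yields $\E[Z_t\mid x^{(0)}] = \<x^{(0)}, M^t x^{(0)}\>$, and hence $\E[Z_t] = \Trace(M^t) = \sum_i \lambda_i^t$. The hypothesis $t\ge 2\log_{1/l_2} n$ implies $\sum_{i\ge 2}|\lambda_i|^t \le (n-1) l_2^t \lambda^t \le \lambda^t/n$, so $\E[Z_t] = \lambda^t(1+O(1/n))$; this deterministic bias produces the $1/(tn)$ contribution after a $t$-th root and matches the first term of the maximum. For the variance, the law of total variance gives
\begin{eqnarray*}
\var(Z_t) = \var\bigl(\<x^{(0)}, M^t x^{(0)}\>\bigr) + \E\bigl[\var(Z_t\mid x^{(0)})\bigr].
\end{eqnarray*}
The first summand equals $2\Trace(M^{2t})\le 2\srank\lambda^{2t}$ via the standard Gaussian quadratic-form identity and the estimate $\sum_i \lambda_i^{2t}\le \lambda^{2t-1}\sum_i|\lambda_i| = \srank\lambda^{2t}$.

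The substantive step is bounding $\E[\var(Z_t\mid x^{(0)})]$, the contribution of the random sparsifications. I would introduce the second-moment superoperator $T(K) = \E[SKS]$ on symmetric $n\times n$ matrices and decompose $T = T_0 + T_1$ with $T_0(K) = MKM$ and $T_1(K) = \E[WKW]$, where $W = S-M$. Iterating, $\E[x^{(t)}(x^{(t)})^\ast\mid x^{(0)}] = T^t(x^{(0)}(x^{(0)})^\ast)$; since $T_0^t(xx^\ast) = M^t xx^\ast M^t$ reproduces exactly $(\E[Z_t\mid x^{(0)}])^2$, the conditional variance equals $\<x^{(0)}, (T^t - T_0^t)(x^{(0)}(x^{(0)})^\ast) x^{(0)}\>$. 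I would telescope $T^t - T_0^t = \sum_{k=1}^{t} T^{k-1} T_1 T_0^{t-k}$, use the hypothesis~\eqref{eq:AlphaDef} to derive the trace-type bound $\Trace(T_1(A))\le \alpha\lambda^2\Trace(A)$ for PSD $A$, and evaluate the resulting Gaussian expectation over $x^{(0)}$ using Isserlis' theorem together with the stable-rank estimate $\Trace(M^{2j})\le \srank\lambda^{2j}$. Passed through Chebyshev and the $t$-th root, these variance contributions should produce precisely the second and third terms of the claimed maximum.

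The main obstacle is exactly this variance-operator analysis: unlike the eigenvector setting in Theorem~\ref{thm:eigenvector}, there is no projective-space contraction to exploit and $\|x^{(t)}\|$ grows like $\lambda^t$, so direct moment bookkeeping is unavoidable. The upper bound $t\le n/(4\alpha\srank)$ is what keeps the combinatorial expansion of $T^t - T_0^t$ under control and ensures the higher-order corrections form a convergent geometric-type series; the $\log n$ factors likely enter through the lower bound $t\ge \log_2 n$ (needed, for instance, to dominate $(1+\alpha)^t$-type factors accumulated in the second-moment recursion). Once a bound $\var(Z_t)\le V$ is in hand, Chebyshev delivers $|Z_t - \E[Z_t]|\le \sqrt{V/\delta}$ with probability $1-\delta$, and the side condition that the right-hand side of the claimed inequality is smaller than $1/t$ guarantees that the resulting relative fluctuation $\eta = |Z_t/\E[Z_t]-1|$ stays bounded, so the elementary bound $|(1\pm\eta)^{1/t}-1|\le 2\eta/t$ (valid for $|\eta|\le 1/2$) converts the multiplicative control on $Z_t$ into the stated relative-error bound on $\hlam^{(t)}$.
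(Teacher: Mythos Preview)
Your overall structure---compute moments of $Z_t=\<x^{(0)},x^{(t)}\>$, apply Chebyshev, pass through the $t$-th root---matches the paper. The fatal gap is in how you handle the randomness of $x^{(0)}$. You take the \emph{unconditional} variance and invoke the law of total variance, picking up the term $\Var\bigl(\<x^{(0)},M^t x^{(0)}\>\bigr)=2\Trace(M^{2t})\ge 2\lambda^{2t}$. But $\E[Z_t]\approx\lambda^t$, so this single term already forces $\Var(Z_t)/(\E Z_t)^2\ge 2$, and Chebyshev gives nothing: to make $\prob(|Z_t-\E Z_t|\ge\Delta\lambda^t)\le\delta$ you would need $\Delta>\sqrt{2/\delta}>1$, which is incompatible with the side condition $\Delta<1/2$ needed for the $t$-th root linearization. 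In short, the Gaussian quadratic-form fluctuation is of the same order as the signal, so unconditional Chebyshev is vacuous here.

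The paper avoids this by \emph{conditioning on $x^{(0)}$} and restricting to a good set $\good_n=\{x:\max_i|u_i^*x|\le\sqrt{6\log n}\}$, which holds with probability $1-O(1/n^2)$. The bias $\E[Z_t\mid x^{(0)}]-\lambda^t$ is then controlled on $\good_n$ (this is where the $1/n$ in the first term of the max comes from), and the variance that enters Chebyshev is $\sup_{x_0\in\good_n}\Var(Z_t\mid x_0)$, which involves only the sparsification randomness and therefore carries the $\alpha$ factors. The $\log n$ factors in the second and third terms of the bound arise precisely from the truncation level $\sqrt{6\log n}$ defining $\good_n$---not, as you conjecture, from the hypothesis $t\ge\log_2 n$ (that assumption is used only to absorb the term $n\alpha^t$ in the variance bound). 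Your operator decomposition $T=T_0+T_1$ is essentially the paper's $\cA+\cB$ and is the right tool for $\Var(Z_t\mid x_0)$; what is missing is the conditioning step that makes Chebyshev nontrivial.
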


%
%

\subsection{Proof of Theorem \ref{thm:eigenvalue}}
\label{sec:proof1}

The proof idea is fairly simple. 
Let $x_0=x^{(0)}$ and define 
$\hlt[t] \equiv x_0^\ast x^{(t)}$ such that 
our estimator is $\hlam^{(t)} \equiv (|\hlt[t]|)^{1/t}$. 
We will show that this is close to the desired result $\lambda$
by applying Chebyshev inequality to $\hlt[t]$. In order to do this
we need to compute its mean and variance.
\begin{lemma}\label{lemma:GeneralMoments}
Consider the two operators
$\cA,\cB:\reals^{n\times n}\to\reals^{n\times n}$,
defined as follows
\begin{eqnarray}
\cA(X) &\equiv & MXM^*\, ,\label{eq:defA}\\  
\cB(X) &\equiv & \frac{\lambda^2\alpha}{n} \<X,\id_n\>\, \id_n\, ,\label{eq:defBnew}
\end{eqnarray}
where $\<X,Y\>=\Trace(X^\ast Y)$. Then, conditional on $x_0$ we have
\begin{align*}
\E[\lambda^{(t)}|x_0] &=  x_0^*M^tx_0\, ,\\
\Var(\lambda^{(t)}|x_0) &\leq \<x_0x_0^*,(\cA+\cB)^t(x_0x_0^*)\>-\<x_0x_0^*,\cA^t(x_0x_0^*)\>\, .
\end{align*}
\end{lemma}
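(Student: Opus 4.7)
The plan is to compute the conditional mean and variance of $\hlt[t]=x_0^\ast x^{(t)}$ by lifting the dynamics of the rank-one matrix $x^{(t)}(x^{(t)})^\ast$ to a linear iteration on $\reals^{n\times n}$, and then dominating the resulting super-operator by $\cA+\cB$ in the PSD order.

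\textbf{Mean.} Since $\{\Ms^{(\ell)}\}$ are independent with $\E[\Ms^{(\ell)}]=M$, a one-line induction (condition on $x^{(\ell-1)}$ and pull out $\E[\Ms^{(\ell)}]=M$) gives $\E[x^{(t)}\mid x_0]=M^tx_0$, hence $\E[\hlt[t]\mid x_0]=x_0^\ast M^tx_0$. Using symmetry of $M$, this also gives $(\E[\hlt[t]\mid x_0])^2=\langle x_0 x_0^\ast, M^t x_0 x_0^\ast M^t\rangle=\langle x_0 x_0^\ast,\cA^t(x_0 x_0^\ast)\rangle$.

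\textbf{Second moment as an operator iteration.} Writing $(\hlt[t])^2=\langle x_0 x_0^\ast, x^{(t)}(x^{(t)})^\ast\rangle$ and defining the super-operator
$$\cT(X)\equiv \E\bigl[\Ms X\Ms^\ast\bigr]\, ,$$
the identity $x^{(\ell)}(x^{(\ell)})^\ast=\Ms^{(\ell)}x^{(\ell-1)}(x^{(\ell-1)})^\ast(\Ms^{(\ell)})^\ast$ combined with the independence of the $\Ms^{(\ell)}$ gives $\E[x^{(t)}(x^{(t)})^\ast\mid x_0]=\cT^t(x_0 x_0^\ast)$. Therefore
$$\Var(\hlt[t]\mid x_0)=\langle x_0 x_0^\ast,\cT^t(x_0 x_0^\ast)\rangle-\langle x_0 x_0^\ast,\cA^t(x_0 x_0^\ast)\rangle\, .$$

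\textbf{PSD comparison $\cT\preceq\cA+\cB$.} Decompose $\Ms=M+E$ with $\E[E]=0$ to get $\cT=\cA+\cB'$ where $\cB'(X)\equiv\E[EXE^\ast]$. The key claim is $\cB'(X)\preceq\cB(X)$ for every PSD $X$. Computing entrywise using independence of the sparsified entries of $\Ms$,
$$(\cB'(X))_{ij}=\sum_{k,\ell}X_{k\ell}\,\E[E_{ik}E_{j\ell}]=\delta_{ij}\sum_k X_{kk}\var(E_{ik})\, ,$$
which, since $X_{kk}\ge 0$ and by \eqref{eq:AlphaDef}, is diagonal with entries bounded by $(\alpha\lambda^2/n)\Trace(X)$. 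Thus $\cB(X)-\cB'(X)$ is diagonal with non-negative entries, hence PSD.

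\textbf{Monotone induction.} I prove $\cT^t(x_0 x_0^\ast)\preceq(\cA+\cB)^t(x_0 x_0^\ast)$ by induction on $t$. Both $\cA$ and $\cB$ are monotone on the PSD cone ($\cA$ because congruence preserves the order, $\cB$ because $\Trace$ is monotone), and $\cT$ preserves the PSD cone (since $\cT(X)=\E[SXS^\ast]$ and $SXS^\ast$ is PSD a.s.\ when $X$ is). The inductive step writes $\cT^{t+1}(\cdot)=\cA(\cT^t(\cdot))+\cB'(\cT^t(\cdot))$, bounds the first summand by monotonicity of $\cA$ and the IH, and the second by $\cB'(\cT^t(\cdot))\preceq\cB(\cT^t(\cdot))\preceq\cB((\cA+\cB)^t(\cdot))$ using the entrywise claim together with the IH and monotonicity of $\cB$. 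Taking the inner product with the PSD matrix $x_0 x_0^\ast$ preserves the inequality and yields the claimed variance bound.

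\textbf{Main obstacle.} The substantive step is the entrywise PSD comparison $\cB'(X)\preceq\cB(X)$; once one observes that independence of the entries of $E$ forces the off-diagonal expectations $\E[E_{ik}E_{j\ell}]$ to vanish for $i\ne j$, the bound reduces to the scalar assumption \eqref{eq:AlphaDef} applied diagonally. Everything else is bookkeeping with the monotone super-operator iteration; the slight subtlety to keep in mind is that the variance estimate is not proved via pointwise (e.g.\ operator-norm) domination of $\cT$ by $\cA+\cB$, but only via domination tested against a PSD test matrix, which is the weakest form sufficient for the quadratic form we need.
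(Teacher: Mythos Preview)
The paper defers the proof of this lemma to a longer version, so there is no in-paper argument to compare against. Your approach---lifting to the rank-one process $x^{(t)}(x^{(t)})^\ast$, identifying the exact second-moment super-operator ${\cal T}(X)=\E[SXS^\ast]=\cA(X)+\E[EXE^\ast]$, and then dominating it by $\cA+\cB$ in the PSD order via a monotone induction---is the natural route and is correctly carried out; this is almost certainly what the longer version does as well.

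One point deserves an explicit caveat: your key step $\E[E_{ik}E_{j\ell}]=0$ for $(i,k)\neq(j,\ell)$ relies on entrywise independence of the entries of $S^{(\ell)}$, which is \emph{not} among the stated hypotheses of Theorem~\ref{thm:eigenvalue} (only $\E[S^{(\ell)}]=M$ and the per-entry variance bound~\eqref{eq:AlphaDef} are assumed there). Without independence the PSD comparison $\cB'(X)\preceq\cB(X)$ can fail---for instance, take all entries of $E$ equal to a single mean-zero scalar with variance $\alpha\lambda^2/n$; then $\cB'(X)$ is a rank-one matrix whose top eigenvalue exceeds that of $\cB(X)$ by a factor of order $n$. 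The specific form of $\cB$ in the lemma (a scalar multiple of $\id_n$) and the paper's running sparsification model (each entry zeroed independently) make clear that independence is the intended hypothesis, but you should state it explicitly rather than slip it in mid-computation. If instead one works with the symmetric sparsification $S_{ij}=S_{ji}$ with $\{E_{ij}:i\le j\}$ independent, the off-diagonal terms $(\cB'(X))_{ij}=X_{ij}\var(E_{ij})$ no longer vanish and the comparison needs a small additional argument.
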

The next lemma provides a bound on the variance.
\begin{lemma}\label{lemma:BoundMatrElem}
Let $\cA, \cB:$ $\reals^{n\times n}\to \reals^{n\times n}$ be
defined as in Eqs.~(\ref{eq:defA}) and (\ref{eq:defBnew}). 
Further assume $\alpha<1/2$ and $\alpha t\srank<n/4$. Then,
for any two vectors $x$, $y\in\reals^n$, $\|x\|=\|y\|=1$,
\begin{align*}
&\big|\<yy^*,(\cA+\cB)^t(xx^*)\>-\<yy^*,\cA^t(xx^*)\>\big| \\
&\le 4\lambda^{2t}\Big\{\frac{n\alpha^t+8\alpha^2\srank}{4n^2}
+ \frac{\alpha\sqrt{\srank}}{n} \Big(\sum_{i=1}^n
\frac{|\lambda_i|}{\lambda}\big((u_i^*x)^2+(u_i^*y)^2\big)\Big)\\
&\;\;\;\;+\frac{\alpha t\srank}{n}  \Big(\sum_{i=1}^n
\frac{|\lambda_i|}{\lambda}(u_i^*x)^2\Big)\Big(\sum_{i=1}^n
\frac{|\lambda_i|}{\lambda}(u_i^*y)^2\Big)
\Big\}\, .
\end{align*}
\end{lemma}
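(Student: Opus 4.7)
The plan is to work in the eigenbasis of $M$. There, $\cA$ acts as coordinate-wise multiplication $(\cA X)_{ij}=\lambda_i\lambda_j X_{ij}$, while $\cB$ only touches the diagonal, sending it to a multiple of $\id_n$. Consequently the off-diagonal part of $xx^*$ is acted upon only by $\cA$ in the difference $((\cA+\cB)^t-\cA^t)(xx^*)$, so that part cancels; what remains is a sum over configurations that apply $\cB$ at least once.

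Expanding $(\cA+\cB)^t$ over sequences in $\{\cA,\cB\}^t$ and grouping by the positions of the $\cB$'s, a generic term with $r\ge 1$ copies of $\cB$ separated by $\cA$-blocks of lengths $k_0,k_1,\dots,k_r$ (satisfying $\sum_j k_j = t-r$) evaluates, after pairing with $yy^*$, to
\[
\bigl(\lambda^2\alpha/n\bigr)^r\,\|M^{k_0}x\|^2\,\|M^{k_r}y\|^2\prod_{j=1}^{r-1}\|M^{k_j}\|_F^2,
\]
as one sees by applying the operators right-to-left, using $\cB(Y)=(\lambda^2\alpha/n)\Trace(Y)\id_n$ and $\cA^k(\id_n)=M^{2k}$. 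Factoring $\lambda^{2t}$ out and introducing $A_k(x):=\|M^kx\|^2/\lambda^{2k}$ and $B_k:=\|M^k\|_F^2/\lambda^{2k}$, I would use $A_0(x)=1$, $B_0=n$, and for $k\ge 1$ the elementary bound $(\lambda_i/\lambda)^{2k}\le|\lambda_i|/\lambda$, which gives $A_k(x)\le P_x:=\sum_i(|\lambda_i|/\lambda)(u_i^*x)^2$ and $B_k\le\srank$; similarly for $y$.

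Next I would split the multi-sum according to which of $k_0$ and $k_r$ vanishes (four cases) and, within each case, further parametrize by the number of intermediate slots with $k_j\ge 1$ (the remaining intermediate slots contribute the large factor $B_0=n$). The number of configurations is controlled by a product of two binomial coefficients (choose which intermediate slots are positive, then compose $t-r$ into the resulting positive parts). Interchanging the order of summation, the sum over $r$ reduces to a geometric series in $\alpha$, controlled by $(1-\alpha)^{-s}\le 2^s$ for $\alpha<1/2$; the remaining sum over the number $s$ of positive slots reduces to the exponential series $\sum_s (2\alpha t\srank/n)^{s-1}/(s-1)!$, bounded by $e^{1/2}\le 2$ since $\alpha t\srank\le n/4$ by hypothesis.

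Tallying the four cases then yields: the all-$\cB$ configuration $r=t$ contributes exactly $\lambda^{2t}\alpha^t/n$; the case $k_0=k_r=0$ with $r<t$ contributes at most $8\lambda^{2t}\alpha^2\srank/n^2$; each of the two single-endpoint-zero cases contributes at most $4\lambda^{2t}\alpha P_x/n$ or $4\lambda^{2t}\alpha P_y/n$; and the doubly-positive-endpoint case contributes at most $4\lambda^{2t}\alpha t P_xP_y/n$. Using $\srank\ge 1$ (hence $\sqrt\srank\ge 1$) to insert the advertised factors into the last three terms recovers the lemma. The main obstacle is the combinatorial bookkeeping: a careless bound on the $2^t$-term expansion would produce a catastrophic $(1+\alpha)^t$ factor; the saving point is the hypothesis $\alpha t\srank\le n/4$, which converts the relevant series into a rapidly convergent exponential, while distinguishing $k_j=0$ from $k_j\ge 1$ inside the $B_{k_j}$-product is what attaches the correct power of $\srank$ to each term in the decomposition.
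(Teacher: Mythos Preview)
Your approach is correct. The expansion of $(\cA+\cB)^t-\cA^t$ into words with at least one $\cB$, the evaluation of a generic word as $(\lambda^2\alpha/n)^r\|M^{k_0}x\|^2\|M^{k_r}y\|^2\prod_{j=1}^{r-1}\|M^{k_j}\|_F^2$, and the four--case split according to whether $k_0$ and $k_r$ vanish all work as you describe. The key estimates $\binom{t-r-1}{m}\le t^m/m!$, $\sum_{r\ge s+1}\binom{r-1}{s}\alpha^r=\alpha^{s+1}/(1-\alpha)^{s+1}\le(2\alpha)^{s+1}$, and $\sum_{s}(2\alpha t\srank/n)^s/s!\le e^{1/2}<2$ combine to give exactly the four contributions you list, and inserting $\srank\ge 1$ (resp.\ $\sqrt\srank\ge 1$) in the last three terms matches the statement of the lemma. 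Note also that all terms in the expansion are nonnegative, so the absolute value is just the sum---you use this implicitly.

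As for comparison with the paper: the paper itself does not include a proof of this lemma, referring instead to a longer version (``For the proof of Lemmas \ref{lemma:GeneralMoments} and \ref{lemma:BoundMatrElem} we refer to the longer version of this paper''). Your argument is the natural one and almost certainly coincides in spirit with what the authors had in mind; in any case it is complete as written, and the combinatorial bookkeeping you flag as the main obstacle is handled cleanly by your parametrization (number $s$ of positive intermediate slots, then geometric sum in $r$).

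One small clarification on your opening remark: the sentence ``the off--diagonal part of $xx^*$ is acted upon only by $\cA$ in the difference \ldots so that part cancels'' is slightly imprecise. What actually happens is that in every word containing a $\cB$, the first $\cB$ (reading right to left) replaces its argument by a multiple of $\id_n$, discarding the off--diagonal of $M^{k_0}xx^*M^{*k_0}$; your explicit formula already encodes this, so the remark is only motivational and the argument does not rely on it.
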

For the proof of Lemmas \ref{lemma:GeneralMoments} and \ref{lemma:BoundMatrElem} we refer to the longer version of this paper.
%
Let $\good_n$ be any measurable subset of $\reals^n$.
This forms a set of `good' initial 
condition $x_0$, and its complement will be denoted by $\nogood_n$.
With an abuse of notation, $\good_n$ will also 
denote the event $x_0\in\good_n$ (and analogously for $\nogood_n$).
Also let, $\hlam=\hlam^{(t)}$. Then, for any $\Delta>0$,
\begin{align*}
&\prob\big\{\hlam\not\in [\lambda(1-\Delta)^{1/t},
\lambda(1+\Delta)^{1/t}]\big\}\\
&\le \prob\{\big\{\big|\hlam^t-\lambda^t\big|\ge\Delta\lambda^t\,\big|\,
\good_n\big\}+\prob\big\{\nogood_n\big\}\\
&\le \frac{1}{\Delta^2\lambda^{2t}}
\E\big\{(\hlam^t-\lambda^t)^2\big|\,\good_n\big\}+
\prob\big\{\nogood_n\big\}\\
&\le
 \frac{1}{\Delta^2\lambda^{2t}}
\big(\E\{\hlam^t\big|\,\good_n\}-\lambda^t\big)^2+
 \frac{1}{\Delta^2\lambda^{2t}}\sup_{x_0\in\good_n}
\var\big(\hlam^t\big|\,x_0\big)+
\prob\big\{\nogood_n\big\}\, .
\end{align*}
We  shall upper bound each of the three terms in the above expression with 
\begin{eqnarray}
\good_n\equiv \Big\{ x\in \reals^n\,:\, \max_{i\le 1}|u_i^*x|\le \sqrt{6\log n}\Big\}\, ,
\end{eqnarray}
%

Notice that $\prob\{(u_i^*x_0)^2\ge 6\log n\}\leq2/n^3$. 
By the union bound we get $\prob\{x_0\in \nogood_n\}\le {2}/{n^2}$.

Next observe that
\begin{align*}
\frac{1}{\lambda^t}\E\{\hlam^t\big|\,\good_n\}-1
=\big(\E\{(u_1^*x_0)^2|\good_n\}-1\Big)+\sum_{i=2}^n\frac{\lambda_i^t}{\lambda^t}(u_i^*x_0)^2\, .
\end{align*}
The first step can be computed as
\begin{align*}
\E\{(u_1^*x_0)^2|\good_n\}= \frac{\E\{(u_1^*x_0)^2\}
-\E\{(u_1^*x_0)^2\,\ind_{\nogood_n}\}}{1-\prob\{x_0\in\nogood_n\}}
\, ,
\end{align*}
whence, recalling that $\E\{(u_1^*x_0)^2\}=1$, and
$\prob\{x_0\in\nogood_n\}\le 1/2$ for 
all $n$ large enough, we get
\begin{align*}
\big|\E\{(u_1^*x_0)^2|\good_n\}-1\big|
&= \left|\frac{\prob\{x_0\in\nogood_n\}-\E\{(u_1^*x_0)^2\,\ind_{\{x_0\in\nogood_n\}}\}}
{1-\prob\{x_0\in\nogood_n\}}\right|\\
&\le \frac{4}{n^2}\, .
\end{align*}
%
Note further that, by Chernoff inequality, 
$\sum_{i=1}^n(u_i^*x_0)^2\leq 3n$ with probability at least $1-\exp\{(1/10)n\}$. Then, 
%
\begin{align*}
\Big|\frac{1}{\lambda^t}\E\{\hlam^t\big|\,\good_n\}-1\Big|\le \frac{4}{n^2}+ 3n\Big(\frac{|\lambda_2|}{\lambda}\Big)^t\, .
\end{align*}

Finally, using Lemma \ref{lemma:GeneralMoments} and \ref{lemma:BoundMatrElem} we get
\begin{align*}
\frac{1}{\lambda^{2t}}\,\Var(\hlam^t|x_0) &\le 
4\, n^2\, \Big\{\frac{\alpha^t}{4n}+\frac{2\alpha^2\srank}{n^2}
+ 2\frac{\alpha\sqrt{\srank}}{n} \sum_{i=1}^n
\frac{|\lambda_i|}{\lambda}(u_i^*x_0)^2\\
&\;\;+\frac{\alpha t\srank}{n}  \Big(\sum_{i=1}^n
\frac{|\lambda_i|}{\lambda}(u_i^*x_0)^2\Big)^2
\Big\}\;.
\end{align*}
Further, for any $x_0\in\good_n$, 
$\sum_{i=1}^n({|\lambda_i|}/{\lambda})(u_i^*x_0)^2\le 6\, \srank \log n$, and therefore
\begin{align*}
&\frac{1}{\lambda^{2t}}\,\Var(\hlam^t|x_0) \\
& \le 4n\Big\{\frac{\alpha^t}{4}+\frac{2\alpha^2\srank}{n}
+\frac{12\alpha\srank^{3/2}\log n}{n} + \frac{36\alpha t\srank^3(\log n)^2}{n^2}\Big\}\\
&\le 
4n\Big\{\frac{3\alpha^2\srank}{n}
+\frac{12\alpha\srank^{3/2}(\log n)}{n} + \frac{36\alpha t\srank^3(\log n)^2}{n^2}
\Big\}\\
&\le 
4\Big\{15\alpha\srank^{3/2}(\log n) + \frac{36\alpha t\srank^3(\log n)^2}{n}
\Big\}
\, ,
\end{align*}
where we used the fact that, for $\alpha<1/2$ and $t\geq\log_2(n)$, we 
have $\alpha^t/4 \le \alpha^2\srank/n$.

Collecting the various terms we obtain
\begin{align*}
&\prob\big\{\hlam\not\in [\lambda(1-\Delta)^{1/t},\lambda(1+\Delta)^{1/t}]\big\} \\
&\le 
\frac{2}{n^2}+\frac{1}{\Delta^2}\Big\{\frac{4}{n^2}+nl_2^t\Big\}^2
+\frac{4\alpha\log n}{\Delta^2}\Big\{15\srank^{3/2} + 
\frac{36 t\srank^3(\log n)}{n}\Big\}\, ,
\end{align*}
whence $\prob\big\{\hlam\not\in [\lambda(1-\Delta)^{1/t},
\lambda(1+\Delta)^{1/t}]\big\}\le \delta$,
provided $n \ge 4/\sqrt{\delta}$, $\Delta\ge 4\sqrt2/(n\sqrt{\delta})$, 
$\Delta\ge 2n(|\lambda_2|/\lambda)^t /\sqrt{\delta}$,
$\Delta^2 \ge (240\alpha\srank^{3/2}\log n)/\delta$,
$\Delta^2 \ge 4\cdot 144\alpha t\srank^3(\log n)^2/(n\delta)$. 
The thesis follows by noting that $(1+\Delta)^{1/t}\le 1+(\Delta/t)$
and $(1-\Delta)^{1/t}\ge 1-2(\Delta/t)$ provided $\Delta\le 1/2$.



%
\bibliographystyle{abbrv}
\bibliography{GossipPCA}
%
%

\newpage

\appendix

\section{Proof of Theorem 5.1}

We start by restating the main result of \cite{LePage},  in 
a somewhat more explicit form. Recall that $f:\P_n\to\reals$
is said to be $\lambda$-H\"older continuous if its \emph{H\"older 
coefficient}, defined by
\begin{eqnarray}
[f]_{\lambda} = \sup_{\bx\neq \by}\frac{|f(\bx)-f(\by)|}{d(\bx,\by)^{\lambda}}\, ,
\end{eqnarray}
is finite.
\begin{theorem}[Le Page, 1982]\label{thm:Page}
Under assumptions \LA\, and \LB\, there exists a unique 
measure $\mu$ on $\P_n$ that is stationary for the Markov chain
$\{\bX_t\}$. Further, there exists constants $A\ge 0$, $\rho\in (0,1)$,
$\lambda\in (0,1]$
such that, for any $\lambda$-H\"older function $f:\P_n\to\reals$,
\begin{eqnarray*}
\big|\E\{f(\bX_t)\}-\mu(f)\big|\le \, A\, \rho^t\, [f]_{\lambda}\, .
\end{eqnarray*}
\end{theorem}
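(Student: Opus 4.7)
The plan is to combine a uniform projective contraction estimate with a spectral argument on a suitable Banach space of Hölder functions, obtaining the existence of a unique stationary measure and the geometric convergence rate in one package. The first and most delicate step is to produce a positive gap between the top two Lyapunov exponents of the i.i.d. product $\Mt[t]\cdots\Mt[1]$. From assumption $\LB$ one extracts a deterministic product in the support of $p_S$ whose normalized image of $\reals^n$ collapses onto a line, which prevents any invariant finite union of proper subspaces; combined with the irreducibility furnished by $\LA$, the Furstenberg/Goldsheid-Margulis criterion forces $\gamma_1 > \gamma_2$ for the Lyapunov spectrum. The functional consequence I would extract is: for some Hölder exponent $\lambda_0 \in (0,1]$, some $T \geq 1$, and some $\rho_0 \in (0,1)$,
\[
\sup_{\bx \neq \by}\;\frac{\E\bigl[d(\Mt[T]\cdots\Mt[1]\bx,\,\Mt[T]\cdots\Mt[1]\by)^{\lambda_0}\bigr]}{d(\bx,\by)^{\lambda_0}} \;\le\; \rho_0,
\]
uniformly over distinct lines $\bx,\by \in \P_n$.

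Next, I would study the Markov transition operator $P$ defined by $(Pf)(\bx)=\E[f(\Ms\bx)]$, viewed as a bounded operator on the Banach space $C^{\lambda_0}(\P_n)$ of $\lambda_0$-Hölder functions equipped with the norm $\|f\|_{\lambda_0}=\|f\|_\infty+[f]_{\lambda_0}$. The contraction above yields $[P^T f]_{\lambda_0}\le\rho_0[f]_{\lambda_0}$, so iterating gives $[P^{kT}f]_{\lambda_0}\le\rho_0^k[f]_{\lambda_0}$, while the bound $\|P^tf\|_\infty\le\|f\|_\infty$ is trivial. This is exactly a Doeblin-Fortet / Ionescu-Tulcea-Marinescu inequality, which implies that $P$ is quasi-compact on $C^{\lambda_0}(\P_n)$ with essential spectral radius bounded by $\rho_0^{1/T}<1$.

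Existence and uniqueness of the stationary measure $\mu$ then follow from standard arguments. Quasi-compactness confines all spectrum outside the disc of radius $\rho_0^{1/T}$ to finitely many eigenvalues of finite multiplicity on the unit circle; irreducibility ($\LA$) together with the strict contraction of $P^T$ on the subspace $\{f:\mu(f)=0\}$ rules out peripheral eigenvalues other than $1$ and shows that $1$ is simple. Writing $P=\Pi+R$ with $\Pi f=\mu(f)\cdot\mathbf{1}$ the rank-one projection onto the stationary eigenline and $R$ of spectral radius strictly less than $1$, one has $P^tf-\mu(f)=R^tf$, whence $|\E[f(\bX_t)]-\mu(f)|\le A\rho^t[f]_{\lambda_0}$ for some $A\ge 0$ and any $\rho$ strictly between the spectral radius of $R$ and $1$.

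The main obstacle is the uniform contraction step: ensuring that $\rho_0$ is genuinely independent of the pair $\bx,\by$ requires care near the diagonal, where the denominator in the contraction ratio threatens to blow up. The standard remedy, and I expect the one needed here, is to work with a Hölder exponent $\lambda_0$ strictly less than $1$ so that the random variable $d(\Ms\bx,\Ms\by)^{\lambda_0}/d(\bx,\by)^{\lambda_0}$ has finite expectation even when $\bx$ and $\by$ are very close; the almost-sure non-singularity of $\Ms$ then supplies the needed local Lipschitz bound for the action of $\Ms$ on $\P_n$. This, together with a positivity argument for $\E[\log\|\Ms v\|]-\gamma_1$ that exploits irreducibility and the sub-exponential deviation of singular directions (essentially Furstenberg's positivity lemma), is what drives $\rho_0<1$. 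Once this uniformity is secured, the remainder of the proof reduces to invoking the Ionescu-Tulcea-Marinescu theorem as a black box.
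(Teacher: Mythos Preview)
The paper does not give a proof of this statement at all: it is stated as Le Page's theorem, and the only justification offered is the one-line remark that ``the above follows immediately from Theorem 1 in \cite{LePage} via a simple coupling argument.'' In other words, the paper treats this as a black-box citation, deferring entirely to the reference.

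Your proposal, by contrast, sketches a genuine proof from first principles: establish a Lyapunov gap from $\LA$ and $\LB$ via the Furstenberg/Goldsheid--Margulis criterion, extract a uniform H\"older contraction on $\P_n$, and then run the Ionescu--Tulcea--Marinescu machinery to get quasi-compactness of the transfer operator on $C^{\lambda_0}(\P_n)$, concluding uniqueness of $\mu$ and the geometric rate. This is in fact essentially the architecture of Le Page's original argument, and your identification of the delicate point (uniformity of the contraction ratio near the diagonal, remedied by taking $\lambda_0<1$) is accurate. So the outline is sound as a reconstruction of the cited result.

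The practical difference is one of scope: the paper simply imports the theorem, while you are reproving it. What your approach buys is self-containment and insight into where the constants $A,\rho,\lambda$ come from; what the paper's approach buys is brevity, since for the purposes of this work only the \emph{existence} of such constants is needed, and the quantitative content actually used later (Theorems \ref{thm:muproperty0} and \ref{thm:GeneralConv}) is derived separately under the additional hypotheses $\Aa$, $\Ab$ with explicit $\rho$. If your goal is to match the paper, a one-line citation suffices; if your goal is to understand the result, your sketch is the right roadmap.
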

{\bf Remark: } The above follows immediately from Theorem 1
in \cite{LePage}
via a simple coupling argument. Notice in particular that it applies 
to any Lipschitz function since  $[f]_{\lambda}$
is upper bounded by the Lipschitz modulus of $f$.

Next we restate and prove the first part of Theorem 
\ref{thm:muproperty}.
\begin{theorem}\label{thm:muproperty0}
Assume conditions $\LA$ and $\LB$ hold, together with 
$\Aa$, $\Ab$.  Denote by $\stat$ the unique stationary measure 
of the Markov chain $\{\bX_t\}_{t\ge 0}$.
Then 
\begin{eqnarray}
\stat(\G^c) = 0.
\end{eqnarray}
\end{theorem}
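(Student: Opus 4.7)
The plan is to exploit the forward-invariance of $\G$ (assumption $\Aa$) together with the convergence of $\{\bX_t\}$ to $\stat$ provided by Theorem~\ref{thm:Page}. The basic intuition: a chain started inside $\G$ cannot escape $\G$, so its time-$t$ law is supported on $\G$ for every $t$; since this law converges to $\stat$, the limit $\stat$ must assign full mass to $\G$.

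Concretely, I would first pick any $\bx_0 \in \G$ (such a point exists since, in the application, $\bu \in \G$) and let $\nu_t$ denote the law of $\bX_t$ conditional on $\bX_0 = \bx_0$. By $\Aa$ and a straightforward induction on $t$, $\nu_t(\G) = 1$ for every $t \ge 0$.

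Next I would approximate $\ind_{\G^c}$ from below by Lipschitz functions. For each integer $k \ge 1$, define
\[
f_k(\bx) \equiv \min\bigl\{1,\; k\cdot d(\bx,\G)\bigr\}, \qquad d(\bx,\G) \equiv \inf_{\by \in \G} d(\bx,\by).
\]
Each $f_k$ is bounded by $1$, is $k$-Lipschitz on $(\P_n, d)$ (as the minimum of two $k$-Lipschitz functions), and vanishes identically on $\G$. In particular $f_k(\bX_t) = 0$ almost surely, so $\E[f_k(\bX_t)] = 0$ for every $t$. Applying Theorem~\ref{thm:Page} to the Lipschitz (hence H\"older) function $f_k$ yields
\[
\stat(f_k) \;=\; \lim_{t\to\infty}\E[f_k(\bX_t)] \;=\; 0.
\]
Because $\G$ is closed --- in the application $\G$ is the closed ball $\{\bx : d(\bx,\bu) \le 2\theta/(1-l_2)\}$, and the metric $d$ is continuous --- we have $d(\bx,\G) > 0$ for every $\bx \in \G^c$, so $f_k \uparrow \ind_{\G^c}$ monotonically as $k \to \infty$. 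Monotone convergence then gives $\stat(\G^c) = \lim_k \stat(f_k) = 0$.

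The only subtlety I anticipate is the closedness of $\G$, which is needed for $f_k$ to increase to $\ind_{\G^c}$ rather than to $\ind_{\overline{\G}^c}$; this is automatic in the application since $\G$ is a closed ball. An equivalent route would use the portmanteau theorem: Theorem~\ref{thm:Page} applied to Lipschitz test functions implies weak convergence $\nu_t \to \stat$ on the compact projective space $\P_n$, and this combined with $\nu_t(\G) = 1$ for all $t$ forces $\stat(\G) = 1$ directly from the characterization of weak convergence on closed sets.
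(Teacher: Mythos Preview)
Your argument is correct and follows the same idea as the paper: start the chain inside $\G$, use $\Aa$ to keep it there, and invoke Theorem~\ref{thm:Page} (convergence plus uniqueness of the stationary measure) to conclude that $\stat$ lives on $\G$. The paper's proof is terser---it simply asserts that the chain started in $\G$ has a stationary distribution supported on $\G$ and then appeals to uniqueness---whereas you make the passage from ``$\nu_t(\G)=1$ for all $t$'' to ``$\stat(\G)=1$'' rigorous via Lipschitz approximation and closedness of $\G$; this extra care is warranted, since Theorem~\ref{thm:Page} only gives convergence against H\"older test functions, not in total variation.
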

\begin{proof}
Consider a Markov chain $MC_1$ with $\bx_0 \in \G$. The Markov chain $MC_1$ has a
stationary distribution because conditions $\LA$ and $\LB$ hold. From the
property \Aa, we know that $\bx_t\in \G$. Therefore the stationary distribution of
$MC_1$, say $\mu_1$, satisfies $\mu_1(\G^c) =0$. From Theorem~\ref{thm:Page} we
know that the stationary distribution is unique. Therefore $\mu = \mu_1$ and
hence $\mu(\G^c) = 0$.
\end{proof}

Finally, we will state and prove a generalization of
the second part of Theorem \ref{thm:muproperty}. For 
this we generalize hypothesis $\Ab$ as follows.
\begin{enumerate}
\item[\Abb.] For any $\bx\neq\by\in \G$, $\E\left[d(\Mt[t]\bx,\Mt[t]\by)
^{\lambda}
\right] \le \rho \, d(\bx,\by)^{\lambda}$.
\end{enumerate}

\begin{theorem}\label{thm:GeneralConv}
Assume conditions \LA\, and \LB\, hold, together with 
\Aa\, and \Abb.  Denote by $\stat$ the unique stationary measure 
of the Markov chain $\{\bX_t\}_{t\ge 0}$. Let $\bx_{0}\in \G$.
Then for any $\lambda$-H\"older function $f:\P_n\to\reals$,
we have
\begin{eqnarray}
\big|\E\{f(\bX_t)\}-\stat(f)\big|\le \, \rho^t\, [f]_{\lambda}\, .
\end{eqnarray}
%
\end{theorem}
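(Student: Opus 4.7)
The plan is to prove the bound by a synchronous coupling argument, following the same template as the proof of Theorem \ref{thm:muproperty0}. Let $\{\bX_t\}_{t \ge 0}$ denote the chain started at the given point $\bx_0 \in \G$, and let $\{\bY_t\}_{t \ge 0}$ denote a second chain started at $\bY_0 \sim \mu$. I couple the two chains by driving them with the \emph{same} sequence of sparsifications $\Mt[1], \Mt[2], \dots$, so that $\bX_t = \Mt[t] \cdots \Mt[1] \bx_0$ and $\bY_t = \Mt[t] \cdots \Mt[1] \bY_0$ almost surely. Since $\mu$ is stationary, $\bY_t \sim \mu$ for every $t$, hence $\E[f(\bY_t)] = \mu(f)$.

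First I would establish that both chains stay in the absorbing set $\G$. By Theorem \ref{thm:muproperty0} we have $\mu(\G^c) = 0$, so $\bY_0 \in \G$ almost surely, and by hypothesis $\bX_0 = \bx_0 \in \G$. Applying assumption \Aa{} inductively along the sample path to each of the two chains gives $\bX_t, \bY_t \in \G$ for every $t \ge 0$ almost surely. This is the step that makes the coupling work, and it is really the only place where \Aa{} enters.

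Next I would iterate the contraction estimate \Abb{} in the $\lambda$-power of the metric. Conditioning on $(\bX_{t-1}, \bY_{t-1})$, which lies in $\G \times \G$ almost surely, assumption \Abb{} yields
\begin{equation*}
\E\!\left[ d(\bX_t, \bY_t)^{\lambda} \,\big|\, \bX_{t-1}, \bY_{t-1} \right] \le \rho \, d(\bX_{t-1}, \bY_{t-1})^{\lambda}.
\end{equation*}
Taking expectations and iterating gives $\E[ d(\bX_t, \bY_t)^{\lambda}] \le \rho^t \, \E[d(\bX_0, \bY_0)^\lambda] \le \rho^t$, using the fact that the metric $d(\bx,\by) = \sqrt{1 - \langle x,y\rangle^2}$ is bounded by $1$ on $\P_n$, so the initial $\lambda$-distance is at most $1$.

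Finally, I would apply the Hölder property: $|f(\bX_t) - f(\bY_t)| \le [f]_{\lambda} \, d(\bX_t, \bY_t)^{\lambda}$ pointwise. Taking expectations,
\begin{equation*}
\big| \E[f(\bX_t)] - \mu(f) \big| = \big| \E[f(\bX_t)] - \E[f(\bY_t)] \big| \le [f]_{\lambda} \, \E\!\left[ d(\bX_t, \bY_t)^{\lambda} \right] \le [f]_{\lambda} \, \rho^t,
\end{equation*}
which is exactly the claim. I do not anticipate a serious obstacle; the only minor subtlety is ensuring $\bY_0$ can be drawn from $\mu$ independently of the driving sequence and that the coupled pair $(\bX_{t-1}, \bY_{t-1})$ lies in $\G \times \G$ when \Abb{} is invoked, both of which are handled by Theorem \ref{thm:muproperty0} and \Aa.
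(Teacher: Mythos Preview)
Your proposal is correct and follows essentially the same synchronous coupling argument as the paper: couple a chain started at $\bx_0\in\G$ with one started from $\mu$ using the same sparsifications, use \Aa{} (together with $\mu(\G^c)=0$) to keep both trajectories in $\G$, iterate the contraction \Abb{} to get $\E[d(\bX_t,\bY_t)^\lambda]\le\rho^t$, and finish with the H\"older bound. The only cosmetic difference is that the paper packages the iteration via the submultiplicative quantity $\rho_\lambda(t)=\sup_{\bx_0\neq\by_0\in\G}\E\big[(d(\bX_t,\bY_t)/d(\bX_0,\bY_0))^\lambda\big]$ rather than conditioning step by step, but the content is identical.
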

The proof of this theorem is based on a coupling 
argument. The coupling assumed throughout is fairly simple:
given initial conditions $\bx_0$, $\by_0\in\G$, we define
the chain $\{(\bX_t,\bY_t)\}_{t\ge 0}$ by letting
$(\bX_0,\bY_0) = (\bx_0,\by_0)$ and, for all $t\ge 1$,
\begin{eqnarray}
\bX_t = \Mt[t]\Mt[t-1]\cdots \Mt[1]\bx_0\, ,\;\;\;\;\;\;\;
\bY_t = \Mt[t]\Mt[t-1]\cdots \Mt[1]\by_0\, .
\end{eqnarray}
It is further convenient to introduce,
for $t\in\naturals$, $\lambda > 0$ the quantity
\begin{eqnarray}
\rho_{\lambda}(t) \equiv \sup_{\bx_0\neq\by_0\in \G}\E\Big\{
\Big[\frac{d(\bX_t,\bY_t)}{d(\bX_0,\bY_0)}\Big]^{\lambda}\Big\}\, .
\end{eqnarray}
\begin{proof} 
First notice that the function  
$t\mapsto\rho_{\lambda}(t)$ is submultiplicative. This follows from
\begin{align*}
\rho_{\lambda}&(t_1+t_2) = \sup_{\bx_0\neq\by_0\in \G}\E\Big\{
\Big[\frac{d(\bX_{t_1+t_2},\bY_{t_1+t_2})}{d(\bX_0,\bY_0)}\Big]^{\lambda}\Big\} \nonumber \\
&=  \sup_{\bx_0\neq\by_0\in \G}\E\Big\{
\Big[\frac{d(\bX_{t_1},\bY_{t_1})}{d(\bX_0,\bY_0)}\Big]^{\lambda}
\Big[\frac{d(\bX_{t_1+t_2},\bY_{t_1+t_2})}{d(\bX_{t_1},\bY_{t_1})}\Big]^{\lambda}\Big\}\nonumber\\
&\stackrel{(a)}{\le} \sup_{\bx_0\neq\by_0\in \G}\E\Big\{
\Big[\frac{d(\bX_{t_1},\bY_{t_1})}{d(\bX_0,\bY_0)}\Big]^{\lambda}\Big\}\sup_{\bx_0\neq\by_0\in
\G}\E\Big\{
\Big[\frac{d(\bX_{t_2},\bY_{t_2})}{d(\bX_0,\bY_0)}\Big]^{\lambda}\Big\} \nonumber \\
&= \rho_{\lambda}(t_1)\rho_{\lambda}(t_2)\, .
\end{align*}
where $(a)$ follows from the condition \Aa. From the condition \Abb, we know that
$\rho_\lambda(1)\leq\rho$, hence  
$\rho_{\lambda}(t) \le \rho^t$.

Next let $\{\bX_{t}\}_{t\ge 0}$ and $\{\bY_{t}\}_{t\ge 0}$ be
Markov chains coupled as above, with initial conditions
$\bX_0=\bx_0 \in \G$ and $\bY_0\sim\mu$. We then have
\begin{align*}
\big|\E\{f(\bX_t)\}-\mu(f)\big|&= 
\big|\E\{f(\bX_t)\}-\E\{f(\bY_t)\}\big|\\
&\le
\E\big\{|f(\bX_t)-f(\bY_t)|\big\}\\
&\le [f]_{\lambda}
\E\big\{d(\bX_t,\bY_t)^{\lambda}\big\}\\
&\le [f]_{\lambda}
\E\Big\{\Big[\frac{d(\bX_t,\bY_t)}{d(\bX_0,\bY_0)}\Big]^{\lambda}
\Big\}\\
&\le  [f]_{\lambda}\, \rho_{\lambda}(t) \leq [f]_\lambda \rho^t\, ,
\end{align*}
where we used $d(\bX_0,\bY_0)\le 1$ by definition. 
This concludes our proof.
%
\end{proof}

\end{document}